\title {An FKN Theorem for the Binary Grassmann Scheme}
\author{
	Yuval Filmus\thanks{Taub Faculty of Computer Science and Faculty of Mathematics, Technion Israel Institute of Technology. The work was supported by the Israel Science Foundation, Grant No.\ 507/24.}
	\and
    Anqi Li\thanks{Department of Mathematics, Stanford University.}
    \and 
	Dor Minzer\thanks{Department of Mathematics, Massachusetts Institute of Technology. Supported by NSF CCF award 2227876 and NSF CAREER award 2239160.}}
\date{\vspace{-5ex}}

\documentclass[11pt]{article}
\usepackage{times}
\usepackage[T1]{fontenc}
\usepackage{amssymb}
\usepackage{amsmath}
\usepackage{amsthm}
\usepackage{thmtools}
\usepackage{thm-restate}
\usepackage{bm}
\usepackage{xcolor}
\usepackage{fullpage}

\usepackage{hyperref}
\hypersetup{hypertexnames=false}

\usepackage{thmtools}
\usepackage{thm-restate}

   \newtheorem{thm}{Theorem}[section]
   
   \newtheorem{lemma}[thm]{Lemma}
   
   \newtheorem{claim}[thm]{Claim}
   \newtheorem{fact}[thm]{Fact}
   \newtheorem{remark}[thm]{Remark}
   \newtheorem{definition}[thm]{Definition}

   \newtheorem{problem}{Problem}

\newcommand\E{\mathbb{E}}
\newcommand\card[1]{\left| {#1} \right|}

\newcommand\sett[2]{\left\{ \left. #1 \;\right\vert #2 \right\}}

\newcommand\Prob[2]{{\Pr_{#1}\left[ {#2} \right]}}

\newcommand\norm[1]{\| #1 \|}

\newcommand\Expect[2]{{\mathop{\mathbb{E}}_{#1}\left[ {#2} \right]}}

\newcommand\inner[2]{\langle{#1},{#2}\rangle}
\newcommand\eps{\varepsilon}

\renewcommand\geq{\geqslant}
\renewcommand\leq{\leqslant}

\newcommand{\rom}[1]{\uppercase\expandafter{\romannumeral #1\relax}}

\providecommand{\sqbinom}[2]{\genfrac{[}{]}{0pt}{}{#1}{#2}}

\begin{document}

\maketitle
\begin{abstract}
A classical theorem due to Friedgut, Kalai and Naor asserts that if a function $f\colon \{0,1\}^n\to\{-1,1\}$ close to a degree $1$ function, then either $f$ or $-f$ is close to either the all $1$ function, or to $(-1)^{x_i}$ for some $i\in [n]$. We prove a version of their theorem for the Grassmann scheme over $\mathbb{F}_2$. More precisely, we prove if a function $f\colon \sqbinom{\mathbb{F}_2^n}{\ell}\to\{0,1\}$ is close to a degree $1$ function, then either $f$ or $1-f$ must be close to a function of the form $g(L) = \sum\limits_{x\in\mathcal{X}}1_{x\in L}+\sum\limits_{W\in\mathcal{W}}1_{L\subseteq W}$, where $\mathcal{X}\subseteq\mathbb{F}_2^n$ is a set of points and $\mathcal{W}$ is a set of hyperplanes in $\mathbb{F}_2^n$.
\end{abstract}
\section{Introduction}
The classical Friedgut--Kalai--Naor (FKN) Theorem~\cite{friedgut2002boolean} asserts that if a Boolean function $f\colon \{0,1\}^n\to\{-1,1\}$ is close to a degree $1$ function, then either $f$ or $-f$ must be close either to the all $1$ function, or to a dictatorship function. More precisely, if the Fourier expansion $f(x) = \sum\limits_{\alpha\in\mathbb{F}_2^n}\widehat{f}(\alpha)(-1)^{\inner{\alpha}{x}}$ satisfies that 
\[
\sum\limits_{\card{\alpha}\leq 1} \widehat{f}(\alpha)^2\geq 1-\eps,
\]
then there exists $\alpha\in\mathbb{F}_2^n$ such that $\lvert\widehat{f}(\alpha)\rvert\geq 1-O(\eps)$. This result belongs to the broader class of junta theorems, which are results transferring information regarding the structure of a function in the Fourier domain to information regarding the structure of the function in physical space. Notable examples include the Kindler--Safra theorem~\cite{KS}, the Bourgain junta theorem~\cite{Bourgain2002FourierSpectrum}, and their extensions and refinements~\cite{AlonDinurFriedgutSudakov2004,DinurFriedgutKindlerODonnell2007,JendrejOleszkiewiczWojtaszczyk2012,RubinsteinSafra2015,EFF1,EFF2,EFF3,Filmus2020FKNMultislice,Filmus2021BooleanSn,EldanKindlerLifshitzMinzer2025}. Most relevant to the current paper are the works~\cite{FilmusIhringer,Ihringer2023Classification,Filmus2026Grassmann}, which study the problem over the \emph{Grassmann scheme}.

For a linear space $W$ over $\mathbb{F}_2$ and $\ell\leq \dim(W)$, the vertices of the Grassmann graph $\mathsf{Gr}(W,\ell)$ are $\sqbinom{W}{\ell}$, the set of linear subspaces of $W$ of dimension $\ell$, and $L,L'\in {\sqbinom{W}{\ell}}$ are adjacent if $\dim(L\cap L')=\ell-1$. This graph has recently received much attention in the theoretical computer science community due to its relation to the proof of the $2$-to-$1$-Games Conjecture with imperfect completeness~\cite{KMS1,DKKMS1,DKKMS2,KMS2}. It can be viewed as the sparsest member of the family of short code graphs, which sometimes can be used as more size efficient versions of the long-code~\cite{barak2015making,khot2017hardness}.
The main objective of the current paper is to establish an FKN-style theorem for the Grassmann scheme (we remark that weaker versions of it which have already found some applications~\cite{KaufmanM,MZheng}).

\subsection{The Degree Decomposition} 
To state the problem we study precisely and state our result, we first present the level decomposition on the Grassmann scheme. 
\begin{definition}
    For an integer $0\leq i\leq \ell$, the space $J_{\leq i}$ of degree at most $i$ functions is the span of the functions $\{f_I\}_{I\subseteq W\text{ subspace}, \dim(I)=i}$ where $f_I(L) = 1_{I\subseteq L}$. We define the operator $P_{\leq i}\colon L_2({\sqbinom{W}{\ell}})\to L_2({\sqbinom{W}{\ell}})$ to be the orthogonal projection onto $J_{\leq i}$.
\end{definition}

In these notations, the space of degree $1$ functions is simply $J_{\leq 1}$. To measure distances, we equip the set of real-valued functions with the inner product and $L_p$ norms defined as
\[
\inner{f}{g}=\Expect{L\in {\sqbinom{W}{\ell}}}{f(L)g(L)},
\qquad
\norm{f}_p=\left(\Expect{L\in{\sqbinom{W}{\ell}}}{\card{f(L)}^p}\right)^{1/p},
\]
where the expectation is taken with respect to the uniform distribution. We also use the notation $\mathsf{var}(f) = \Expect{L\in {\sqbinom{W}{\ell}}}{\card{f(L)-\E[f]}^2} = \E[f^2]-\E[f]^2$ for a real-valued function $f$.
The main problem we study is the following:
\begin{problem}\label{problem:main}
    Suppose that a Boolean function $f\colon \sqbinom{\mathbb{F}_2^n}{\ell}\to\{0,1\}$ satisfies that $\norm{f-g}_2^2\leq \eps\mathsf{var}(f)$ for some $g\in J_{\leq 1}$, where $\eps>0$ is thought of as small. What can we say about the structure of $f$?
\end{problem}
Prior works~\cite{FilmusIhringer,Ihringer2023Classification,Filmus2026Grassmann} studied the exact classification problem, namely the case that $\eps=0$. They established the following result:
\begin{thm}\label{thm:exact}
    Suppose that $\ell,n-\ell\geq 2$, and suppose that $f\colon {\sqbinom{\mathbb{F}_2^n}{\ell}}\to\{0,1\}$ is from $J_{\leq 1}$. Then either $f$ or $1-f$ is a function of the form
    \[
    0,\qquad 1_{x\in L},\qquad 1_{L\subseteq W},\qquad 1_{x\in L}+1_{L\subseteq W},
    \]
    for a point $x$ and a hyperplane $W$, and in the last case we have that $x\not\in W$.
\end{thm}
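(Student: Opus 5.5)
The plan is to work with an explicit coordinate description of $J_{\leq 1}$ and argue by induction on $n$, using restrictions and duality. First I would check that the functions $1_{x\in L}$, $x\in\mathbb{F}_2^n\setminus\{0\}$, together with the constant function span $J_{\leq 1}$. The constant is already in their span, since $\sum_x 1_{x\in L}=2^\ell-1$. So every $f\in J_{\leq 1}$ can be written as
\[
f(L)=c+\sum_{x\neq 0} a_x 1_{x\in L}.
\]
The hyperplane functions also have this form: for a hyperplane $W$, $\sum_{x\notin W}1_{x\in L}$ equals $0$ if $L\subseteq W$ and $2^{\ell-1}$ otherwise, so $1_{L\subseteq W}$ is an affine combination of the $1_{x\in L}$ with $x\notin W$. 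Next, the map $L\mapsto L^\perp$ is a bijection $\sqbinom{\mathbb{F}_2^n}{\ell}\to\sqbinom{\mathbb{F}_2^n}{n-\ell}$. It maps $J_{\leq 1}$ to $J_{\leq 1}$, because $1_{x\in L}=1_{L^\perp\subseteq x^\perp}$, and by the previous sentence this is degree $1$ on the dual side. The same identity shows that the map swaps point functions with hyperplane functions, so the list in Theorem~\ref{thm:exact} is closed under duality. Hence it suffices to treat $\ell\leq n-\ell$ whenever convenient.

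The base case is $\ell=2$ (and, by duality, $n-\ell=2$). Here $f(L)=c+a_x+a_y+a_{x+y}$ for $L=\{0,x,y,x+y\}$, and Booleanity says this sum lies in $\{0,1\}$ for every line $\{x,y,x+y\}$ of $PG(n-1,2)$. I would first show that, after subtracting a suitable multiple of the identity $\sum_x 1_{x\in L}=3$, the coefficients $a_x$ take at most two values. The tool is the Fano subplanes: for two lines $\ell_1,\ell_2$ meeting in a point $p$, the difference $f(\ell_1)-f(\ell_2)$ is a difference of the coefficient sums over $\ell_1\setminus\{p\}$ and $\ell_2\setminus\{p\}$, and it lies in $\{-1,0,1\}$. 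Then I would identify the set of points carrying the exceptional value, which should be one of: empty, a single point $x$, the complement of a hyperplane $W$, or such a complement together with a point $x\notin W$. Matching these with $0$, $1_{x\in L}$, $1_{L\subseteq W}$ and $1_{x\in L}+1_{L\subseteq W}$, possibly after passing to $1-f$, gives the claim. This step is a finite combinatorial analysis on $PG(n-1,2)$ with $n\geq 4$; I would do it first in $PG(3,2)$ and then extend by considering all $3$-dimensional subspaces.

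For the inductive step, take $\ell\geq 3$ and $n-\ell\geq 3$; the case $n-\ell=2$ is the dual base case. For every hyperplane $H$, the restriction $f|_{\sqbinom{H}{\ell}}$ is Boolean and of degree $1$ in $\mathsf{Gr}(H,\ell)$, and $\dim H-\ell\geq 2$. By induction it is one of the four types, or the complement of one, inside $H$. Similarly, restricting to subspaces $L\ni p$ and passing to the quotient $\mathbb{F}_2^n/\langle p\rangle$ gives $\mathsf{Gr}(n-1,\ell-1)$ with $\ell-1\geq 2$, which I would use as a second source of local information.

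The main obstacle is gluing these local descriptions into a global one. The local types do not determine the coefficients $a_x$ uniquely, because of the relation $\sum_x 1_{x\in L}=\mathrm{const}$. My plan is to normalize the representation so that the coefficients are unique: fix $c$ and impose that the $a_x$ sum to $0$. I would check that restriction to $H$ commutes with this normalization up to a single additive constant per hyperplane. Then for two hyperplanes $H,H'$, the two local descriptions must agree on $\sqbinom{H\cap H'}{\ell}$. This is nonempty and rich enough, since $\dim(H\cap H')=n-2\geq \ell+1$, to force the distinguished point $x$ and hyperplane $W\cap H$ of the local descriptions to be compatible. Concretely, I expect to show that the special point is the same in all $H$ containing it, and that the hyperplanes $W\cap H$ extend to a single global hyperplane $W$. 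The delicate cases are hyperplanes $H$ on which the local type degenerates, for example $H=W$, where $1_{L\subseteq W}$ becomes constant, or $H\not\ni x$. I would handle these by choosing the comparison hyperplanes generically and counting.
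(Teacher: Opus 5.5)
The paper does not prove this theorem; it quotes it from \cite{FilmusIhringer,Ihringer2023Classification,Filmus2026Grassmann}, so your outline has to stand on its own. The skeleton is sensible: with duality and hyperplane restrictions, everything reduces to the single case $(n,\ell)=(4,2)$ plus a gluing lemma for $n\geq 5$. However, there is a genuine gap. The one concrete claim you make about the base case is false, and both load-bearing steps are only announced.

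The $2^n-1$ functions $1_{x\in L}$ are linearly independent and span $J_{\leq 1}$. So the data $(c,(a_x))$ is unique up to the shift $a_x\mapsto a_x+s$, $c\mapsto c-(2^\ell-1)s$, and this shift does not change the number of distinct values of $a$. Take $\ell=2$ and $f=1_{x\in L}+1_{L\subseteq W}$ with $x\notin W$. Then $c=1$, $a_x=\frac12$, $a_y=-\frac12$ for $y\notin W\cup\{x\}$, and $a_y=0$ for $y\in W$: three values. So ``the coefficients take at most two values'' fails for the fourth type. Your candidate ``complement of $W$ together with a point $x\notin W$'' is just the complement of $W$, which describes $1_{L\subseteq W}$ rather than this type. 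The base-case analysis as planned therefore aims at a false target.

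More importantly, the real content of the base case is not argued. That content is that every Boolean function in $J_{\leq 1}$ on the $35$ lines of $\mathrm{PG}(3,2)$ is trivial, i.e.\ there are no nontrivial Cameron--Liebler line classes for $q=2$. The tool you propose cannot supply this. On the $7$ lines of a Fano plane, $J_{\leq 1}$ is the whole space ($\dim J_{\leq 1}=2^3-1=7$). Hence constraints such as $f(\ell_1)-f(\ell_2)\in\{-1,0,1\}$ for two lines in a plane say nothing beyond Booleanity. The obstruction only appears through the way the planes of $\mathrm{PG}(3,2)$ share coefficients. You need either the known classification for small parameter (for $q=2$ the parameter is at most $5$, and complementation brings it down to at most $2$) or an exhaustive computation.

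The gluing is likewise only anticipated, though it can be done cleanly in your own language. For $n\geq 5$ any four points lie in a common hyperplane, and each trivial pattern on a hyperplane has at most three coefficient values. So globally $a$ has at most three values. In the three-valued case, the argument goes in three steps. First, one extreme value class is a single point $x_0$: test two points from each extreme class in a common hyperplane. Second, the middle class together with $0$ is closed under addition: test $y,z$ with $x_0$ and a point of the other extreme class. Third, a dimension count over hyperplanes through $x_0$ shows the middle class is $W\setminus\{0\}$ for a hyperplane $W\not\ni x_0$. The two-valued cases are similar; the gap ($1$ versus $2^{1-\ell}$) tells you whether you see a point or a hyperplane. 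Booleanity then fixes the additive constant.

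Until the $\mathrm{PG}(3,2)$ classification and this gluing lemma are actually supplied, the proposal is a plan, not a proof.
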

Our main result, stated in the next section, gives a robust version of this statement in the case that $\ell,n-\ell$ are both thought of as large.

\subsection{Main Result}
Our main result is the following assertion.
\begin{thm}\label{thm:main}
    For all $\delta>0$ there exist $\eps>0$ and $T\in\mathbb{N}$ such that the following holds for $\ell<n$ such that $\ell,n-\ell\geq T$. Suppose a function $f\colon {\sqbinom{\mathbb{F}_2^n}{\ell}}\to\{0,1\}$ satisfies that 
    $\norm{f-P_{\leq 1} f}_2^2\leq \eps\mathsf{var}(f)$. Then there exist a set of points $\mathcal{X}\subseteq \mathbb{F}_2^n$ and a set of hyperplanes $\mathcal{W}$ such that defining
    \[
    g(L) = \sum\limits_{x\in\mathcal{X}}1_{x\in L}
    +
    \sum\limits_{W\in\mathcal{W}}1_{L\subseteq W}, 
    \]
    we have that $\norm{f-g}_2^2\leq \delta \mathsf{var}(f)$, or 
    $\norm{(1-f)-g}_2^2\leq \delta\mathsf{var}(f)$. Also, we have that $\mathsf{var}(f)\leq \delta$ and 
    $\norm{g - G}_2^2 \leq O(\mathsf{var}(f)^2)$, where $G(L)$ is obtained from $g(L)$ by rounding to the closest Boolean value.
\end{thm}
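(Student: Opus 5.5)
We will prove Theorem~\ref{thm:main} by analysing the degree $1$ part of $f$ directly. The first step is to parametrise $J_{\leq 1}$. Every $h \in J_{\leq 1}$ can be written as
\[
h(L) = c + \sum_{x\neq 0} a_x 1_{x\in L}.
\]
Since $1_{L\subseteq W} = 1 - \sum_{x\notin W}\dots$ is also degree $1$, we will use the dual basis to rewrite $h$ as a combination of point indicators and hyperplane indicators. We write $P_{\leq 1}f = c + \sum_x a_x 1_{x\in L}$, where $a_x = \Theta(2^{\ell-n})$-scaled correlations $\E[f\mid x\in L]-\E[f]$.

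Next we reduce to the small-variance regime. Restricting $f$ to subspaces $L$ containing a fixed point (a copy of $\mathsf{Gr}(n-1,\ell-1)$), or contained in a fixed hyperplane (a copy of $\mathsf{Gr}(n-1,\ell)$), keeps $f$ close to degree $1$ on average. We will combine these restrictions with a hypercontractive / level-$1$ inequality for the Grassmann graph. The aim is the following dichotomy: if $\mathsf{var}(f)$ were bounded below, then $f$ would have to be close to a function of constant degree and non-negligible variance. We will then use the exact classification, Theorem~\ref{thm:exact}, on random restrictions to conclude that $\mathsf{var}(f)\to 0$ as $\eps\to 0$ and $T\to\infty$. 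The point is that the functions $1_{x\in L}$ and $1_{L\subseteq W}$ have variance about $2^{\ell-n}$ and $2^{-\ell}$ respectively, which are tiny once $\ell,n-\ell\geq T$. So either $f$ or $1-f$ has small mean $\mu$, and we replace $f$ by $1-f$ if needed.

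In the main step we assume $\mu=\E[f]$ is small and $\norm{f-P_{\leq 1}f}_2^2\leq \eps\mu$. We will show that the mass of $P_{\leq 1}f$ concentrates on a few large coefficients. To do this we write $P_{\leq 1}f$ in the mixed form $\sum_x b_x 1_{x\in L}+\sum_W d_W 1_{L\subseteq W}+c$. Using the Booleanity $f^2=f$, we expand $\E[(P_{\leq 1}f)^2 - P_{\leq 1}f]$ and bound it by $O(\eps\mu)$. The cross terms $\E[1_{x\in L}1_{y\in L}]$ are roughly $2^{2(\ell-n)}$, and the pair events are nearly independent. Because of this near-independence, the identity forces $\sum b_x^2\Pr[x\in L](1-b_x)^2$ to be small, and similarly for the $d_W$. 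This gives an FKN-style rounding: each coefficient is close to $0$ or $1$. The sets $\mathcal X=\{x: b_x\approx1\}$ and $\mathcal W=\{W:d_W\approx 1\}$ then give $g$ with $\norm{f-g}_2^2\leq\delta\mu$.

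The final claim, $\norm{g-G}_2^2=O(\mathsf{var}(f)^2)$, follows because $g$ is non-Boolean only when $L$ lies in two of the events. The total probability of such overlaps is at most $(\sum_x\Pr[x\in L]+\sum_W\Pr[L\subseteq W])^2 = O(\mu^2)$ by the pairwise near-independence. The main obstacle is the near-independence / anticoncentration step: the events $x\in L$ and $L\subseteq W$ are correlated, with $x\in W$ mattering, and the degree-$1$ representation is not unique. To handle this, we will first try to make the decomposition canonical by taking the projection coefficients directly. We will then control the cross terms using the exact intersection probabilities in $\mathsf{Gr}(n,\ell)$ over $\mathbb F_2$, which differ from product values by factors of $1+O(2^{-\min(\ell,n-\ell)})$.
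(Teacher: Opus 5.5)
There are genuine gaps in both of your main steps.

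\textbf{Reduction to small variance.} This step is missing its key mechanism. The restrictions you describe (the star of a point, or a hyperplane) are copies of $\mathsf{Gr}(n-1,\ell-1)$ or $\mathsf{Gr}(n-1,\ell)$, so their dimension is still unbounded. Theorem~\ref{thm:exact}, by contrast, is an exact statement. It becomes robust only through a compactness constant $C(n,\ell)$ that depends on the dimension (Lemma~\ref{lem:weak}), so it cannot be applied to these restrictions uniformly. Your ``dichotomy'' is vacuous, since $f$ is close to degree $1$ by hypothesis. The appeal to the variances of $1_{x\in L}$ and $1_{L\subseteq W}$ is circular, since it presupposes that $f$ is already close to such a function.

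What is needed is the paper's $t$-restriction $L\mapsto A+L$ with $L\in\sqbinom{C}{t}$ and $\dim C=2t$. It has three properties:
\begin{enumerate}
\item It keeps degree-$1$ functions of degree $1$ (Fact~\ref{fact:rest_deg_1}).
\item It lands in a \emph{fixed} dimension, so the constant $C(2t,t)$ and the error $2^{1-t}$ are dimension free.
\item It sends a random edge of $\mathsf{Gr}(C,t)$ to a uniformly random edge of $\mathsf{Gr}(\mathbb{F}_2^n,\ell)$.
\end{enumerate}
The third property lets the Poincar\'e inequality (Fact~\ref{fact:poincare}) turn ``random restrictions are nearly constant'' into $\mathsf{var}(f)\le 2C(2t,t)\eps+2^{2-t}$.

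\textbf{The FKN-style rounding step.} This fails as stated. For $g=P_{\le 1}f$ one has exactly $\E[g^2-g]=-\norm{f-g}_2^2$. That is a single quadratic constraint on the coefficients, and its diagonal terms $\Pr[x\in L](b_x^2-b_x)$ take both signs. It cannot bound the quartic quantity $\sum_x b_x^2\Pr[x\in L](1-b_x)^2$.

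In the cube, FKN bounds the fourth-moment quantity $\mathsf{var}(g^2)$ by upgrading an $L_1$ bound on $g^2-f^2$ to an $L_2$ bound, using hypercontractivity of degree-$2$ functions. That upgrade fails on the Grassmann scheme for non-global functions; already $\norm{1_{x\in L}}_2/\norm{1_{x\in L}}_1=\Pr[x\in L]^{-1/2}$. Near-independence of pairs does not substitute for it.

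Consequently nothing in your plan excludes an $f$ whose level-$1$ mass is carried by many tiny coefficients. Indeed $1_{L\subseteq W}=1-2^{1-\ell}\sum_{x\notin W}1_{x\in L}$ is exactly of this kind. So the non-canonical mixed representation you flag is a real obstruction, not a technicality.

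The paper never reads structure off coefficients. Instead it proceeds as follows:
\begin{enumerate}
\item It declares $x$ (resp.\ $W$) heavy when $\mu_x(f)>\eta$ (resp.\ $\mu_W(f)>\eta$).
\item It removes the non-heavy remainder $h_\eta$ using the global level-$1$ inequality (Theorem~\ref{thm:lvl_1_ineq} via Claim~\ref{claim:h_close_to_0}).
\item It shows that all but a negligible fraction of heavy points and hyperplanes have density at least $1-2\delta$. This uses an averaging of $|f-g|$ with weights $Z(L)$ (Claim~\ref{claim:second-moment-bounds}), followed by Lemma~\ref{lem:almost_constnat_absolute} applied to the restrictions $f_x$ and $f_W$.
\end{enumerate}

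Your final overlap estimate for $\norm{g-G}_2^2$ also needs $\sum_x\Pr[x\in L]+\sum_W\Pr[L\subseteq W]=O(\mu)$. The paper obtains this from Lemma~\ref{lem:trivial_upper_bound_on_sizes}, and that lemma works precisely because its $\mathcal{X}$ and $\mathcal{W}$ consist of points and hyperplanes of density at least $1-2\delta$.
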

While we did not explicitly compute the dependency of $\eps, T$ on $\delta$, it is worth mentioning that $\eps$ is much smaller than polynomial in $\delta$, but as far as we know the theorem may be correct even with $\delta = O(\eps)$ (but still thinking of both $\ell$ and $n-\ell$ as large). Addressing the case that one of $\ell$ and $n-\ell$ is small seems to be more challenging.

\begin{remark}
The difference in structure between Theorem~\ref{thm:exact} and Theorem~\ref{thm:main} resembles a phenomenon encountered in other domains such as the $p$-biased cube and the symmetric group. Taking as an example the $p$-biased cube, if $f\colon \{0,1\}^n \to \{0,1\}$ satisfies $\|f^{>1}\|^2 \le \eps$ (with respect to $\mu_p$) then
\begin{itemize}
    \item \textbf{Coarse approximation:} $f$ or $1-f$ is $O(\sqrt{\eps} + p)$-close to a Boolean constant. In our case, this corresponds to the part of Theorem~\ref{thm:main}, asserting that $\mathsf{var}(f)$ is small.
    \item \textbf{Refined approximation:} $f$ or $1-f$ is $O(\eps)$-close to a sum of $m$ many $x_i$'s, where $m$ is upper bounded by $\max(1,O(\sqrt{\eps}/p))$. In our case, this corresponds to the part of Theorem~\ref{thm:main} asserting that $f$ and $g$ are close, even relative to $\mathsf{var}(f)$.
\end{itemize}
\end{remark}

\begin{remark}
Theorem~\ref{thm:exact} generalizes to arbitrary finite fields~\cite{Ihringer2023Classification,Filmus2026Grassmann}, with the additional assumption that $\max(\ell,n-\ell)$ is larger than a constant depending on~$q$.

Our main result Theorem~\ref{thm:main} also generalizes to arbitrary finite fields, with essentially the same proof. In order to keep the paper concise, we work exclusively over $\mathbb{F}_2$.
\end{remark}

\subsection{Proof Overview}
The proof of Theorem~\ref{thm:main} proceeds in three steps, which we outline below. A key intermediate goal is the following result:
\begin{restatable}{lemma}{MainLemma}
\label{lem:almost_constnat_absolute}
    For all $\delta>0$, there exists $\eps>0$ and $T\in\mathbb{N}$, such that if $\ell,n\in\mathbb{N}$ are integers such that $\ell,n-\ell\geq T$, and $f\colon {\sqbinom{\mathbb{F}_2^n}{\ell}}\to\{0,1\}$ satisfies that $\norm{f-g}_1\leq \eps$ for some $g\in J_{\leq 1}$, then
    \[
    \E[f]\leq \delta\qquad\text{or}\qquad \E[f]\geq 1-\delta.
    \]
\end{restatable}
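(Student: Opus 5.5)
Our plan is to reduce the statement to a one-dimensional problem, by restricting $f$ to suitable random sub-structures and using the rigidity of Boolean functions that are exactly (or nearly) degree $1$ there. Write $g(L) = c + \sum_x a_x 1_{x\in L} + \sum_{W} b_W 1_{L\subseteq W}$; since $1_{L\subseteq W}$ for a hyperplane $W$ is itself in $J_{\le 1}$, this is a spanning set, though not a basis. First, we replace $g$ by a truncated version $\tilde g = \max(0,\min(1,g))$, which only decreases $\norm{f-\tilde g}_1$. We then note that $\norm{f-g}_1\le\eps$ implies that $g$ is $\eps$-close in $L_1$ to a Boolean function. The goal is to show that a degree $1$ function which is close to Boolean must be close to a constant in $L_1$, unless it is close to a nonconstant junta-like object. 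Nonconstant objects such as $1_{x\in L}$ or $1_{L\subseteq W}$ have expectation about $2^{\ell-n}$ or $2^{-\ell}$, which is at most $2^{-T}\le\delta$ once $T$ is large. So the task is to show that $f$ is close to a sum of boundedly many such indicators, or to its complement.

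\textbf{Step 1 (restriction to a random pair of levels).} We pick a random $(\ell+1)$-dimensional space $U$ and a random $(\ell-1)$-dimensional $V\subseteq U$, and look at the "line" of $\ell$-spaces $L$ with $V\subseteq L\subseteq U$. This set is a copy of $\mathbb{P}(U/V)\cong$ the three points of $\mathbb{F}_2^2\setminus\{0\}$. The key structural fact about $J_{\le 1}$ is that restricted to such a configuration, and more usefully to the up-link $\{L: V\subseteq L\}$, which is $\sqbinom{\mathbb{F}_2^{n-\ell+1}}{1}$ (points), a degree $1$ function becomes an affine function of the point together with hyperplane indicators. By averaging, $f$ restricted to a random such link is $O(\eps)$-close in $L_1$ to the restriction of $g$. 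Using the hypercontractive/small-set structure of points of $\mathbb{F}_2^m$, the restricted function is then nearly constant.

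\textbf{Step 2 (local near-constancy implies global).} We show that on most links the restriction is $o(1)$-close to a constant $c_V\in\{0,1\}$. We then run a connectivity argument: the Grassmann graph is an expander with spectral gap bounded below uniformly, and adjacent links share many vertices, so $c_V$ must agree for most $V$. This yields $\E[f]\in[0,\delta]\cup[1-\delta,1]$.

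\textbf{Main obstacle.} The hardest step is the robust one-level statement in Step 1: a $\{0,1\}$-valued function on points of $\mathbb{F}_2^m$ which is $\eps$-close in $L_1$ to $c+\sum a_x 1_{x=y}+\sum b_W 1_{y\in W}$ must be close to constant. The hyperplane terms give genuinely balanced Boolean functions there (for example $1_{y\in W}$ has density $1/2$), so we cannot use links of that shape alone. Instead we would use two-sided restrictions, taking $V\subseteq L\subseteq U$ with $\dim U-\dim V$ a large constant $k$. In that range both point and hyperplane indicators have density about $2^{-k}$ or are constant, and $g$ restricts to a degree $1$ function on $\sqbinom{\mathbb{F}_2^{2k}}{k}$. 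On this finite object we would argue by compactness: by Theorem~\ref{thm:exact}, an exactly Boolean degree $1$ function there is one of the listed forms, all of density $\le O(2^{-k})$ or $\ge 1-O(2^{-k})$. A finite-dimensional compactness argument then upgrades near-Boolean to near one of these forms, with $\eps$ depending on $k$, which is where the poor dependence of $\eps$ on $\delta$ enters. The expected density of $f$ on a random such sub-Grassmannian equals $\E[f]$, and the fraction of sub-Grassmannians that are near-$0$ versus near-$1$ is then forced to be consistent by the overlap/expansion argument of Step 2.
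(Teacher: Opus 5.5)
Once the point-link version of Step 1 is dropped (its near-constancy claim is false, as you yourself note), your plan has the paper's architecture. The paper's $t$-restriction $(A,B,C)$ is exactly your sub-Grassmannian $\{L : A\subseteq L\subseteq A\oplus C\}\cong\sqbinom{\mathbb{F}_2^{2t}}{t}$; for consistency take $\dim U-\dim V=2k$, not $k$. Degree $1$ is preserved under restriction, and the $L_1$ distance to $g$ is preserved on average. Finiteness plus Theorem~\ref{thm:exact} then puts each restriction's density within $C(2k,k)\eps_S+2^{1-k}$ of $0$ or $1$. The gap is in the local-to-global Step 2.

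\textbf{Why Step 2 fails as written.} Two distinct links of this shape share at most a $1/(2^k+1)$ fraction of their vertices. That is the same order as the unavoidable $2^{1-k}$ local error, so constants cannot be propagated along overlapping links. For example, take $f=1_{x\in L}$. A link with $x\in U\setminus V$ has constant $0$. Replacing $V$ by $H\oplus\langle x\rangle$, with $H$ a hyperplane of $V$, gives a maximally overlapping link on which $f\equiv 1$. Moreover, the spectral gap you cite is that of $\mathsf{Gr}(\mathbb{F}_2^n,\ell)$. An agreement argument over ``adjacent links'' needs expansion of an auxiliary graph on links, which you never establish. An averaged-over-pairs version could be made to work, but only after such an extra spectral computation.

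\textbf{The missing observation.} You do not need constants $c_V$ at all. Choose a random link, then a random pair $L,L'$ inside it with $\dim(L\cap L')=\ell-1$. By transitivity of $\mathrm{GL}_n(\mathbb{F}_2)$ on such pairs, $(L,L')$ is a uniformly random edge of $\mathsf{Gr}(\mathbb{F}_2^n,\ell)$. Since $L$ and $L'$ are each marginally uniform in the link, $\Pr[f(L)\neq f(L')]\leq 2\E[\Delta_S]$, where $\Delta_S$ is the distance of the link density from $\{0,1\}$. On the other hand, $\Pr[f(L)\neq f(L')]=2\inner{f}{(\mathrm{I}-\mathrm{T}_{\mathsf{Gr}(\mathbb{F}_2^n,\ell)})f}\geq\mathsf{var}(f)$ by Fact~\ref{fact:poincare}. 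Together these give $\mathsf{var}(f)\leq 2C(2k,k)\eps+2^{2-k}$. Choosing $k\approx\log(1/\delta)$ and then $\eps$ small enough proves the lemma.
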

The first two steps establish Lemma~\ref{lem:almost_constnat_absolute}, and in the third step we show how to derive Theorem~\ref{thm:main} from it. We now elaborate on each one of these steps.

In the first step, we establish a weak form of (a variant of) Lemma~\ref{lem:almost_constnat_absolute}, where the closeness parameter $\eps$  is allowed to depend on the dimensions $n$ and $\ell$. Such an assertion turns out to follow easily from the exact classification result of~\cite{FilmusIhringer} via  a compactness argument. 
In the second step, we lift this weak result to a corresponding result for large dimensions $n$ and $\ell$. This step proceeds by applying an appropriate analog of random restrictions on the Grassmann scheme (which was proposed to us by ChatGPT, as discussed below). Using the weak result, we are able to claim that with  probability close to $1$, random restrictions of the function $f$  are close to being constant. We conclude from it that $f$ itself must be close to $\delta$-constant, for some $\delta = \delta(\eps)$ that vanishes with $\eps$, thereby establishing Lemma~\ref{lem:almost_constnat_absolute}.

In the third step we conclude the stronger closeness guarantee of Theorem~\ref{thm:main}. Here, the main point is that the closeness between $f$ and $g$ is relative to the measure of $f$. To do so we use global hypercontractivity to first find a set of points $\mathcal{X}$ and hyperplanes $\mathcal{W}$ such that the conditional expectations 
$\Expect{L\ni x}{f(L)},\Expect{L\subseteq W}{f(L)}$ are noticeable for every $x\in\mathcal{X}$, $W\in\mathcal{W}$, and collectively they cover ``almost all'' of the $L$ such that $f(L)=1$. We then show, using conditioning and the preceding Lemma~\ref{lem:almost_constnat_absolute}, that for most $x\in\mathcal{X}$ (and similarly, for most $W\in\mathcal{W}$), the corresponding conditional expectation must actually be close to $1$.
\vspace{-2ex}
\paragraph{Statement of AI use:} The authors used ChatGPT 5.6 Pro to find the dimension reduction argument in Section~\ref{sec:dim_reduction}. The authors then prompted chatGPT 5.6 with the high-level plan of the proof as presented in this paper. Initially it raised concerns about gaps in the arguments, and the authors offered ways of filling these gaps in, and eventually it produced a proof. The version presented below was written by the authors.
\section{Preliminaries}
We will use the normalized adjacency operator $\mathrm{T}_{\mathsf{Gr}(W,\ell)}\colon L_2({\sqbinom{W}{\ell}})\to L_2({\sqbinom{W}{\ell}})$ of the Grassmann graph, which is defined by 
\[
\mathrm{T}_{\mathsf{Gr}(W,\ell)}f(L)
=\Expect{\substack{L'\in{\sqbinom{W}{\ell}}\\ \dim(L'\cap L)=\ell-1}}{f(L')}.
\]
\begin{definition}
    For $i\geq 0$, we define the space $J_{=i} = J_{\leq i}\cap J_{\leq i-1}^{\perp}$ (under the convention that $J_{=0}$ is the space of all constant functions). We define $P_{=i}$ to be the orthogonal projection onto $J_{=i}$.
\end{definition}

It is well known that the spaces $J_{=i}$ are eigenspaces of $\mathrm{T}_{\mathsf{Gr}(W,\ell)}$, and there are exact formulas for the eigenvalues corresponding to them~\cite{BCN1989,GodsilMeagher2016}. In particular, the second eigenvalue of this operator is at most $1/2$ (see~\cite[Theorem 9.3.3]{BCN1989} or~\cite[Section~9.6, Theorem~9.6.3]{GodsilMeagher2016}), and therefore we get the following Poincar\'e inequality:
\begin{fact}\label{fact:poincare}
  Suppose that $f\colon {\sqbinom{\mathbb{F}_2^n}{\ell}}\to\mathbb{R}$ is a function. Then
  $\inner{f}{(\mathrm{I}-\mathrm{T}_{\mathsf{Gr}(\mathbb{F}_2^n,\ell)})f}_2\geq \frac{1}{2}\mathsf{var}(f)$.
\end{fact}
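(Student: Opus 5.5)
The plan is to expand $f$ in the eigenbasis of $\mathrm{T}=\mathrm{T}_{\mathsf{Gr}(\mathbb{F}_2^n,\ell)}$ given by the degree decomposition, and then bound the quadratic form one level at a time.

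\textbf{Step 1: orthogonal decomposition.} First I check that $L_2({\sqbinom{\mathbb{F}_2^n}{\ell}})=\bigoplus_{i=0}^{\ell}J_{=i}$ is an orthogonal decomposition. The spaces $J_{\leq 0}\subseteq J_{\leq 1}\subseteq\cdots\subseteq J_{\leq \ell}$ are nested. This holds because $1_{I\subseteq L}$ with $\dim I=i-1$ is a suitable multiple of $\sum_{I'\supseteq I,\ \dim I'=i}1_{I'\subseteq L}$, and the latter lies in $J_{\leq i}$. Moreover $J_{\leq \ell}$ is the whole space, since it contains every indicator $1_{L=L_0}$. By construction $J_{=i}=J_{\leq i}\cap J_{\leq i-1}^{\perp}$, so the $J_{=i}$ are mutually orthogonal and sum to everything. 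Hence we can write $f=\sum_i f_i$ with $f_i=P_{=i}f$. Here $f_0=\E[f]$, and therefore $\mathsf{var}(f)=\norm{f-f_0}_2^2=\sum_{i\geq 1}\norm{f_i}_2^2$.

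\textbf{Step 2: diagonalize $\mathrm{T}$.} The Grassmann graph is regular and we use the uniform measure, so $\mathrm{T}$ is self-adjoint. By the fact quoted above, each $J_{=i}$ is an eigenspace of $\mathrm{T}$ with some eigenvalue $\lambda_i$. Clearly $\lambda_0=1$ on constants, and the cited result~\cite[Theorem 9.3.3]{BCN1989} gives $\lambda_i\leq 1/2$ for every $i\geq 1$. Orthogonality of the $f_i$ then gives
\[
\inner{f}{(\mathrm{I}-\mathrm{T})f}=\sum_{i=0}^{\ell}(1-\lambda_i)\norm{f_i}_2^2=\sum_{i\geq 1}(1-\lambda_i)\norm{f_i}_2^2\geq \frac12\sum_{i\geq 1}\norm{f_i}_2^2=\frac12\mathsf{var}(f).
\]
Negative eigenvalues cause no trouble here: they only make $1-\lambda_i$ larger.

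\textbf{Main obstacle.} The only real input is the spectral bound $\lambda_i\leq 1/2$ for $i\geq1$, which I take from the literature. If one wanted to verify it directly, I would use the explicit eigenvalues of the Grassmann graph, $\theta_i=q^{i+1}\sqbinom{\ell-i}{1}\sqbinom{n-\ell-i}{1}-\sqbinom{i}{1}$ with $q=2$. Normalizing by the degree $q\sqbinom{\ell}{1}\sqbinom{n-\ell}{1}$, the ratio $\theta_1/\theta_0$ is roughly $q^{2}(2^{\ell-1})(2^{n-\ell-1})/(q\,2^{\ell}2^{n-\ell})\approx 1/2$, and a careful computation shows it is at most $1/2$. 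The eigenvalues $\theta_i$ are decreasing in $i$, so $\lambda_1$ is the largest nontrivial one and the bound covers all $i\geq 1$.
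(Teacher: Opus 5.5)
Your proposal is correct and follows the same argument as the paper. You decompose $f$ into the orthogonal eigenspaces $J_{=i}$ of $\mathrm{T}_{\mathsf{Gr}(\mathbb{F}_2^n,\ell)}$ and use that every nontrivial eigenvalue is at most $1/2$ (the paper simply cites \cite[Theorem 9.3.3]{BCN1989} for this), giving $\inner{f}{(\mathrm{I}-\mathrm{T})f}\geq \frac12\sum_{i\geq 1}\norm{P_{=i}f}_2^2=\frac12\mathsf{var}(f)$; your added check that $\theta_1/\theta_0=\frac{(2^\ell-2)(2^{n-\ell}-2)-1}{2(2^\ell-1)(2^{n-\ell}-1)}<\frac12$ and that the $\theta_i$ decrease in $i$ is correct and fills in the step the paper leaves to the literature.
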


\begin{claim}\label{claim:second-moment-bounds}
 Suppose  $L$ is uniformly distributed over ${\sqbinom{\mathbb{F}_2^n}{\ell}}$. Let $\mathcal{X}$ be
  a set of points of $\mathbb{F}_2^n$, and let $\mathcal{W}$ be a set of hyperplanes of $V$.
  Define $ Z_{\mathcal{X}}(L)=|\mathcal{X}\cap L|$ and  $Z_{\mathcal{W}}(L)=|\{W\in\mathcal{W}:L\subseteq W\}|$. Then 
  \begin{equation*}
    \E[Z_{\mathcal{X}}(L)^2]
    \leq
    \E[Z_{\mathcal{X}}(L)]+\E[Z_{\mathcal{X}}(L)]^2
    \qquad\text{and}\qquad
    \E[Z_{\mathcal{W}}(L)^2]
    \leq
    \E[Z_{\mathcal{W}}(L)]+\E[Z_{\mathcal{W}}(L)]^2.
  \end{equation*}
\end{claim}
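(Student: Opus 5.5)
Both inequalities follow by expanding the square into indicator products and showing that the pairwise events are negatively correlated. For points, write $Z_{\mathcal{X}}=\sum_{x\in\mathcal{X}}1_{x\in L}$. We may assume $0\notin\mathcal{X}$; if $0\in\mathcal{X}$, it contributes the constant $1$, and we handle it separately below. Then
\[
\E[Z_{\mathcal{X}}^2]=\sum_{x}\Pr[x\in L]+\sum_{x\neq y}\Pr[x,y\in L].
\]
It therefore suffices to show $\Pr[x,y\in L]\leq \Pr[x\in L]\Pr[y\in L]$ for distinct nonzero $x,y$. Distinct nonzero vectors over $\mathbb{F}_2$ are linearly independent, so
\[
\Pr[x,y\in L]=\frac{(2^\ell-1)(2^\ell-2)}{(2^n-1)(2^n-2)}, \qquad \Pr[x\in L]=\frac{2^\ell-1}{2^n-1}.
\]
The required inequality reduces to $\frac{2^\ell-2}{2^n-2}\leq\frac{2^\ell-1}{2^n-1}$, which holds because $\ell\leq n$ and $t\mapsto (t-2)/(t-1)$-type comparisons give it: cross-multiplying yields $(2^\ell-2)(2^n-1)\leq(2^\ell-1)(2^n-2)$, i.e.\ $-2^n-2\cdot 2^n\cdot 0\ldots$, which simplifies to $2^\ell\leq 2^n$. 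If $0\in\mathcal{X}$, write $Z=1+Z'$. Then $\E[Z^2]=1+2\E Z'+\E Z'^2\leq 1+2\E Z'+\E Z'+(\E Z')^2$, while $\E Z+(\E Z)^2=1+\E Z'+1+2\E Z'+(\E Z')^2$. So the bound also holds in this case.

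For hyperplanes, write $Z_{\mathcal{W}}=\sum_{W}1_{L\subseteq W}$. It suffices to show $\Pr[L\subseteq W\cap W']\leq\Pr[L\subseteq W]\Pr[L\subseteq W']$ for distinct hyperplanes $W,W'$. The cleanest route is duality: $L\subseteq W$ if and only if $W^\perp\subseteq L^\perp$, where $W^\perp$ is a nonzero point and $L^\perp$ is a uniform $(n-\ell)$-dimensional subspace. This reduces the claim to the point case with $\ell$ replaced by $n-\ell$. Alternatively, one can compute directly: $\Pr[L\subseteq U]=\sqbinom{\dim U}{\ell}/\sqbinom{n}{\ell}$, and the needed inequality becomes
\[
\frac{\sqbinom{n-2}{\ell}}{\sqbinom{n-1}{\ell}}\leq\frac{\sqbinom{n-1}{\ell}}{\sqbinom{n}{\ell}},
\]
which equals $\frac{2^{n-\ell-1}-1}{2^{n-1}-1}\leq\frac{2^{n-\ell}-1}{2^n-1}$ and again holds by cross-multiplication.

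There is no real obstacle here. The only care needed is the degenerate case of the zero point in the point argument, and checking the elementary inequality after cross-multiplication.
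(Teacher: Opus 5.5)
Your proof is correct and is essentially the paper's argument: expand the square into indicator products and use the explicit pairwise probabilities $\Pr[x,y\in L]=p\cdot\frac{2^\ell-2}{2^n-2}\le p^2$ and $\Pr[L\subseteq W\cap W']=q\cdot\frac{2^{n-\ell}-2}{2^n-2}\le q^2$. Your duality reduction for hyperplanes is an equivalent shortcut, the separate treatment of $0\in\mathcal{X}$ is unnecessary (the pairwise inequality holds with equality when one of the points is $0$), and the garbled cross-multiplication line should simply read $(2^\ell-2)(2^n-1)\le(2^\ell-1)(2^n-2)$, which is equivalent to $2^\ell\le 2^n$.
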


\begin{proof}
  We define $  p=\mathbb{P}[x\in L]=\frac{2^\ell-1}{2^n-1}$ and $q=\mathbb{P}[L\subseteq W]=\frac{2^{n-\ell}-1}{2^n-1}$ where $x$ is any point of $\mathbb{F}_2^n$ and $W$ is any hyperplane of $\mathbb{F}_2^n$.  For distinct points $x,y$ and distinct hyperplanes $W,W'$, we have 
  \[
    \mathbb{P}[x,y\in L]
    =
    p \cdot \frac{2^\ell-2}{2^n-2}
    \leq p^2
    \qquad\text{and}\qquad
    \mathbb{P}[L\subseteq W\cap W']
    =
    q\cdot \frac{2^{n-\ell}-2}{2^n-2}
    \leq q^2.
  \]
  Consequently, it follows that 
  \begin{align*}
    \E[Z_{\mathcal{X}}(L)^2]
    &\leq |\mathcal{X}|p+|\mathcal{X}|(|\mathcal{X}|-1)p^2
    \leq \E[Z_{\mathcal{X}}(L)]+\E[Z_{\mathcal{X}}(L)]^2,
  \end{align*}
  and the result for $Z_{\mathcal{W}}(L)$ follows analogously. 
\end{proof}

\section{Proof of Main Result}
\subsection{The Fixed Dimension Stability Result}
The starting point of our proof is Theorem~\ref{thm:exact}, and we turn it into a weak form of Lemma~\ref{lem:almost_constnat_absolute} with dimension dependency.
\begin{lemma}\label{lem:weak}
    For $n,\ell$ such that $\ell,n-\ell\geq 2$, there exists $C(n,\ell)>0$ such that the following holds. Suppose that $f\colon {\sqbinom{\mathbb{F}_2^n}{\ell}}\to\{0,1\}$ satisfies that there exists $g\in J_{\leq 1}$ such that $
    \norm{f-g}_1\leq \eps$. Then either $f$ or $1-f$ is $C(n,\ell)\cdot \eps$ close to a function of the form
    \[
    0,\qquad 1_{x\in L},\qquad 1_{L\subseteq W},\qquad 1_{x\in L}+1_{L\subseteq W},
    \]
    for a point $x$ and a hyperplane $W$, and in the last case we have that $x\not\in W$.
\end{lemma}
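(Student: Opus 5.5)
We argue by compactness over the finite set of Boolean functions on the fixed finite domain $\sqbinom{\mathbb{F}_2^n}{\ell}$. Let $\mathcal{F}$ be the set of all $f\colon \sqbinom{\mathbb{F}_2^n}{\ell}\to\{0,1\}$; this set is finite, with at most $2^{N}$ elements where $N=\sqbinom{n}{\ell}$. Let $\mathcal{S}\subseteq\mathcal{F}$ be the functions $f$ such that $f$ or $1-f$ is one of the four listed forms. For $f\in\mathcal{F}$ put $d(f)=\min_{g\in J_{\leq 1}}\norm{f-g}_1$. This minimum is attained, because $J_{\leq 1}$ is a finite-dimensional subspace, $\norm{f-g}_1$ is a continuous function of $g$ that tends to infinity as $g$ does, and so we may minimize over a compact ball.

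The key observation is that $d(f)=0$ exactly when $f\in J_{\leq 1}$. By Theorem~\ref{thm:exact}, which applies since $\ell,n-\ell\geq 2$, this happens exactly when $f\in\mathcal{S}$. Hence every $f\in\mathcal{F}\setminus\mathcal{S}$ has $d(f)>0$, and since $\mathcal{F}$ is finite,
\[
\eta(n,\ell)=\min_{f\in\mathcal{F}\setminus\mathcal{S}} d(f)>0
\]
(if $\mathcal{F}\setminus\mathcal{S}$ is empty there is nothing to prove).

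Now set $C(n,\ell)=1/\eta(n,\ell)$, and suppose $\norm{f-g}_1\leq\eps$ for some $g\in J_{\leq 1}$. There are two cases.
\begin{itemize}
\item If $f\in\mathcal{S}$, then $f$ or $1-f$ is exactly of one of the four forms, so it is at distance $0\leq C\eps$ from such a function.
\item If $f\notin\mathcal{S}$, then $\eps\geq d(f)\geq\eta$. For Boolean $f$ and $0$, the distance $\norm{f-0}_1=\E[f]\leq 1$, and $0$ is one of the allowed forms. Therefore $f$ is at distance at most $1\leq \eps/\eta=C\eps$ from an allowed function.
\end{itemize}
In either case the conclusion holds, and the closeness is in the $L_1$ norm.

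No step here is a real obstacle. The only subtleties are the attainment of the minimum in $d(f)$ and the reliance on Theorem~\ref{thm:exact} to identify exactly which $f$ have $d(f)=0$. The constant $C(n,\ell)$ this gives is ineffective and may be astronomically large, which is acceptable because the lemma only asserts that some constant exists for each fixed $n,\ell$.
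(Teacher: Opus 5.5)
Your proof is correct and follows the paper's argument. Both use the finiteness of the set of Boolean functions on $\sqbinom{\mathbb{F}_2^n}{\ell}$, together with Theorem~\ref{thm:exact}, to guarantee that $\min_{g\in J_{\leq 1}}\norm{f-g}_1>0$ for every $f$ outside the allowed family. The only difference is bookkeeping: the paper takes $C(n,\ell)$ to be the maximum over such $f$ of the ratio (distance to the family)/(distance to $J_{\leq 1}$). You instead take $C(n,\ell)=1/\eta(n,\ell)$ and use the trivial bound $\norm{f-0}_1\leq 1$, which gives a possibly larger constant.
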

\begin{proof}
    Let $\mathcal{F}_{n,\ell}$ be the set of all functions of the form in Theorem~\ref{thm:exact} and their complements. Then $\mathcal{F}_{n,\ell}$ is finite and its size depends only on $n,\ell$. 
    Define
    \[
    C(n,\ell) = \max_{\substack{f\colon {\sqbinom{\mathbb{F}_2^n}{\ell}}\to\{0,1\}\\ f\not\in\mathcal{F}_{n,\ell}}}\frac{{\min_{h\in \mathcal{F}_{n,\ell}}\norm{f-h}_1}}{\min_{g\in J_{\leq 1}} \norm{f-g}_1}.
    \]
    Note that by Theorem~\ref{thm:exact}, for 
    every $f\not\in\mathcal{F}_{n,\ell}$ we have that $\min_{g\in J_{\leq 1}}\norm{f-g}_1>0$, and also the number of distinct such Boolean $f$ is finite. Hence, this maximum exists and is finite, and the result follows. 
\end{proof}
We have the following immediate corollary, asserting that under the conditions of Lemma~\ref{lem:weak}, the function $f$ must have average close to $0$ or $1$. 
\begin{lemma}\label{lem:close_to_constant}
For $n,\ell$ such that $\ell,n-\ell\geq 2$, there exists $C(n,\ell)\in\mathbb{N}$ such that the following holds. Suppose that $f\colon {\sqbinom{\mathbb{F}_2^n}{\ell}}\to\{0,1\}$ satisfies that $
    \norm{f-g}_1\leq \eps$ for some $g\in J_{\leq 1}$. Then 
    \[
    \Expect{L}{f(L)}\leq C(n,\ell)\eps+2^{-\ell}+2^{-(n-\ell)}
    \qquad
    \text{or}
    \qquad
    \Expect{L}{f(L)}\geq 1-C(n,\ell)\eps-2^{-\ell}-2^{-(n-\ell)}.
    \]
\end{lemma}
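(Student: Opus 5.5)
The plan is to apply Lemma~\ref{lem:weak} directly and then compute the expectations of the candidate structured functions. By Lemma~\ref{lem:weak}, either $f$ or $1-f$ is $C(n,\ell)\eps$-close in $L_1$ to some $h\in\{0,\ 1_{x\in L},\ 1_{L\subseteq W},\ 1_{x\in L}+1_{L\subseteq W}\}$, where $x\notin W$ in the last case. Since $\card{\E[f]-\E[h]}\leq\norm{f-h}_1$, it suffices to bound $\E[h]$ for each such $h$. (If the closeness is for $1-f$, we get the symmetric conclusion $\E[f]\geq 1-\cdots$.) We keep the same constant $C(n,\ell)$, rounding it up to an integer if necessary, as the statement requests.

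The computation is as follows. $\E[1_{x\in L}]=p=\frac{2^\ell-1}{2^n-1}\leq 2^{\ell-n}=2^{-(n-\ell)}$, and $\E[1_{L\subseteq W}]=q=\frac{2^{n-\ell}-1}{2^n-1}\leq 2^{-\ell}$; both values were already computed in the proof of Claim~\ref{claim:second-moment-bounds}. To verify the inequality $\frac{2^a-1}{2^n-1}\leq 2^{a-n}$, note that it is equivalent to $2^n-2^{a}\leq 2^n-2^{a}\cdot 2^{0}$ after cross-multiplying, i.e.\ $(2^a-1)2^n\leq 2^a(2^n-1)$, which reduces to $2^a\leq 2^n$. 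For the last case, $\E[h]\leq p+q\leq 2^{-\ell}+2^{-(n-\ell)}$. In every case, therefore, $\E[h]\leq 2^{-\ell}+2^{-(n-\ell)}$, and so $\E[f]\leq C(n,\ell)\eps+2^{-\ell}+2^{-(n-\ell)}$ when $f$ is the close function. When $1-f$ is the close function, we get $\E[1-f]\leq$ the same quantity, which gives the second alternative.

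There is no real obstacle here. The only care needed is to note that the case split in Lemma~\ref{lem:weak} ($f$ versus $1-f$) maps exactly onto the two alternatives in the conclusion, and to confirm that the elementary bounds on $p$ and $q$ hold for all $n,\ell$.
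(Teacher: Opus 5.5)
Your proof is correct and matches the paper's argument: apply Lemma~\ref{lem:weak}, bound $\E[h]\le 2^{-\ell}+2^{-(n-\ell)}$ for every listed $h$ using $p\le 2^{-(n-\ell)}$ and $q\le 2^{-\ell}$, and transfer the bound to $\E[f]$ (or $\E[1-f]$) via $\card{\E[f]-\E[h]}\le\norm{f-h}_1$. The intermediate expression $2^n-2^a\le 2^n-2^a\cdot 2^0$ in your verification is garbled, but the cross-multiplied inequality $(2^a-1)2^n\le 2^a(2^n-1)$ that follows it is the correct one.
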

\begin{proof}
    By Lemma~\ref{lem:weak} there is a function $h\colon {\sqbinom{\mathbb{F}_2^n}{\ell}}\to\{0,1\}$ as therein such that 
    $\Prob{L}{f(L)\neq h(L)}\leq C(n,\ell)\eps$. Inspecting functions in the list in Lemma~\ref{lem:weak}, we see that either $\Expect{L}{h(L)}\leq 2^{-\ell}+2^{-(n-\ell)}$ or $\Expect{L}{h(L)}\geq 1-(2^{-\ell}+2^{-(n-\ell)})$, and the result follows.
\end{proof}

\subsection{The Dimension Reduction Argument}\label{sec:dim_reduction}
Our next goal is to prove a version of Lemma~\ref{lem:close_to_constant} without the dimension dependency. Towards this end, we use the following notion of restrictions.
\begin{definition}
    Let $t,\ell,n\in \mathbb{N}$ be integers such that $t<\ell<n$, $t\leq n-\ell-2$ and $t\leq \ell-2$. A $t$-restriction consists of subspaces $(A,B,C)$ such that 
    \begin{enumerate}
        \item $A\cap B=\{0\}$ and $A+B=\mathbb{F}_2^n$.
        \item $C\subseteq B$.
        \item $\dim(A)=\ell-t$ and $\dim(C)=2t$.
    \end{enumerate}
    A $t$-random restriction is a uniformly random choice of such $(A,B,C)$.
\end{definition}
\begin{definition}
    Given a function $f\colon {\sqbinom{\mathbb{F}_2^n}{\ell}}\to\mathbb{R}$ and a $t$-restriction $(A,B,C)$, we define the function $f_{A,B,C}\colon \sqbinom{C}{t}\to\mathbb{R}$ by
    \[
    f_{A,B,C}(L) = f(A + L).
    \]
\end{definition}
We have the following basic fact.
\begin{fact}\label{fact:rest_deg_1}
    Suppose that $g\colon {\sqbinom{W}{\ell}}\to\mathbb{R}$ is a degree $1$ function, and $(A,B,C)$ is a $t$-restriction. Then $g_{A,B,C}$ is a degree $1$ function.
\end{fact}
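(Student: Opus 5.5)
By linearity it suffices to check the claim on the spanning set of $J_{\leq 1}$: the constant function and the functions $f_x(L)=1_{x\in L}$ for nonzero points $x\in\mathbb{F}_2^n$. For each such $f_x$ we will show that $(f_x)_{A,B,C}$ is a constant plus a linear combination of functions $1_{y\in L}$ with $y\in C$. Constants obviously restrict to constants. The map $g\mapsto g_{A,B,C}$ is linear, so this handles every $g\in J_{\leq 1}$.

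Fix a $t$-restriction $(A,B,C)$. Every vector decomposes uniquely as $x=a+b$ with $a\in A$ and $b\in B$, because $A\oplus B=\mathbb{F}_2^n$. Take $L\in\sqbinom{C}{t}$, so $L\subseteq C\subseteq B$. Then $A+L$ has dimension $\ell$, since $A\cap L\subseteq A\cap B=\{0\}$. Moreover, $(A+L)\cap B=L$: if $u+v\in B$ with $u\in A$ and $v\in L\subseteq B$, then $u\in B\cap A=\{0\}$. It follows that $x\in A+L$ holds if and only if $b\in L$, because $x=a+b$ and $a\in A$. Hence
\[
(f_x)_{A,B,C}(L)=1_{b\in L}.
\]
This splits into three cases. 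If $b=0$, meaning $x\in A$, the restriction is the constant $1$. If $b\notin C$, it is the constant $0$. Otherwise $b$ is a nonzero point of $C$, and the restriction is $1_{b\in L}$, which is one of the degree-$1$ generators on $\sqbinom{C}{t}$.

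In every case the restriction lies in the degree-$\leq 1$ space $J_{\leq 1}$ on $\sqbinom{C}{t}$, and linearity then gives the fact for all of $J_{\leq 1}$. The only step that needs care is the identity $(A+L)\cap B=L$, which is exactly where the conditions $A\cap B=\{0\}$ and $C\subseteq B$ are used.
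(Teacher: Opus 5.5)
Your proof is correct and follows essentially the same argument as the paper: reduce by linearity to the generators $1_{x\in L}$, write $x=a+b$ with $a\in A$ and $b\in B$, and split into the cases $x\in A$, $b\notin C$, and $b\in C\setminus\{0\}$. You also explicitly verify the identity $(A+L)\cap B=L$, which the paper uses only implicitly.
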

\begin{proof}
    It suffices to prove the statement for $g(L)=1_{x\in L}$ where $x\in W$, and the general case follows by linearity. If $x\in A$, then $g_{A,B,C}\equiv 1$, and the statement is clear. Otherwise, there is a unique way to write $x=a+b$ where $a\in A$ and $b\in B\setminus \{0\}$. If $b\not \in C$ then $g_{A,B,C}\equiv 0$, and the statement is clear again. Else, 
    $g_{A,B,C}(L) = 1_{b\in L}$, which is degree $1$.
\end{proof}
For a $t$-restriction $(A,B,C)$, denote $\Delta_{A,B,C}(f) = \min(\E[f_{A,B,C}], 1-\E[f_{A,B,C}])$. The next lemma asserts that random $t$-restrictions of $f$ are close to being constant. For technical reasons, we need to formulate the assumption that $f$ is close to degree $1$ in $L_1$-norm.
\begin{lemma}\label{lem:rests_are_constants}
    Suppose that $f\colon {\sqbinom{\mathbb{F}_2^n}{\ell}}\to\{0,1\}$ satisfies
    $\norm{f-g}_1\leq \eps$ for some $g\in J_{\leq 1}$. Then for all $2\leq t\leq \ell-2, n-\ell-2$,
    \[
    \Expect{(A,B,C)}{\Delta_{A,B,C}(f)}\leq C(2t,t)\eps+2^{1-t},
    \]
    where $C(2t,t)$ is from Lemma~\ref{lem:weak}.
\end{lemma}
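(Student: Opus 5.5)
We apply Lemma~\ref{lem:close_to_constant} to each restricted function $f_{A,B,C}$ and then average over the random restriction. The only real input is that a uniformly random $t$-restriction, followed by a uniformly random $L\in\sqbinom{C}{t}$, gives a uniformly distributed $\ell$-space $A+L$.

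\textbf{Step 1: the restriction is still close to degree $1$.} Fix a $t$-restriction $(A,B,C)$. By Fact~\ref{fact:rest_deg_1}, $g_{A,B,C}$ lies in $J_{\leq 1}$ on $\sqbinom{C}{t}$, where $C\cong\mathbb{F}_2^{2t}$. Set $\eps_{A,B,C}=\norm{f_{A,B,C}-g_{A,B,C}}_1$. Since $t\geq 2$ and $2t-t=t\geq 2$, Lemma~\ref{lem:close_to_constant} applies with parameters $(2t,t)$. It gives
\[
\Delta_{A,B,C}(f)\leq C(2t,t)\,\eps_{A,B,C}+2^{-t}+2^{-t}=C(2t,t)\,\eps_{A,B,C}+2^{1-t}.
\]
Lemma~\ref{lem:close_to_constant} literally bounds $\E[f_{A,B,C}]$ from above or $1-\E[f_{A,B,C}]$ from above, so it bounds the minimum of the two, which is $\Delta_{A,B,C}(f)$. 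Taking expectation over $(A,B,C)$ gives
\[
\E[\Delta_{A,B,C}(f)]\leq C(2t,t)\,\E[\eps_{A,B,C}]+2^{1-t}.
\]

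\textbf{Step 2: averaging the $L_1$ errors.} It remains to show $\E_{(A,B,C)}[\eps_{A,B,C}]=\norm{f-g}_1\leq\eps$. We have
\[
\E[\eps_{A,B,C}]=\E_{(A,B,C)}\E_{L\in\sqbinom{C}{t}}\bigl|f(A+L)-g(A+L)\bigr|.
\]
So it suffices that, for uniform $(A,B,C)$ and uniform $L\subseteq C$ of dimension $t$, the space $M=A+L$ is uniform in $\sqbinom{\mathbb{F}_2^n}{\ell}$. First, $\dim M=\ell$, because $L\subseteq B$ and $A\cap B=\{0\}$. Second, the joint distribution of $(A,B,C,L)$ is invariant under $GL_n(\mathbb{F}_2)$: the set of $t$-restrictions is a $GL_n$-invariant set, and for fixed $C$ the choice of $L$ is uniform. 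Hence the law of $M$ is $GL_n$-invariant. Since $GL_n$ acts transitively on $\sqbinom{\mathbb{F}_2^n}{\ell}$, $M$ is uniform.

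\textbf{Main obstacle.} The only point needing care is this uniformity claim. It is exactly why the statement is phrased in $L_1$: closeness averages linearly over restrictions, so no concentration argument is needed. Notably, $C(2t,t)$ depends only on $t$, not on $n$ or $\ell$. The conditions $t\leq\ell-2$ and $t\leq n-\ell-2$ are only needed for $t$-restrictions to exist: $\dim B=n-\ell+t\geq 2t$ requires $t\leq n-\ell$, and these conditions imply that.
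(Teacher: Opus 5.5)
Your proof is correct and follows the paper's argument: apply Lemma~\ref{lem:close_to_constant} with parameters $(2t,t)$ to each restriction (using Fact~\ref{fact:rest_deg_1}), then average, using $\E_{(A,B,C)}[\eps_{A,B,C}]=\norm{f-g}_1$. The paper states this last identity without justification, and your $GL_n(\mathbb{F}_2)$-invariance argument, which shows that $A+L$ is uniform in $\sqbinom{\mathbb{F}_2^n}{\ell}$, supplies that missing step.
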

\begin{proof}
    For each $t$-restriction $(A,B,C)$, let $\eps_{A,B,C}(f) = \norm{f_{A,B,C}-g_{A,B,C}}_1$. Then $\Expect{(A,B,C)}{\eps_{A,B,C}(f)}=\norm{f-g}_1\leq \eps$. Applying Lemma~\ref{lem:close_to_constant} for each $A,B,C$ we get that 
    \[
    \Expect{A,B,C}{\Delta_{A,B,C}(f)}\leq \Expect{A,B,C}{C(2t,t)\eps_{A,B,C}+2^{1-t}}
    \leq C(2t,t)\eps+2^{1-t}.\qedhere
    \]
\end{proof}
In the next lemma, we show that the fact that $t$-random restrictions of $f$ are almost constant implies that $f$ itself is almost constant. 
\begin{lemma}\label{lem:almost_constant}
    Suppose $f\colon{\sqbinom{\mathbb{F}_2^n}{\ell}}\to\{0,1\}$ satisfies that $\norm{f-g}_1\leq \eps$ for some $g\in J_{\leq 1}$. Then for all $2\leq t\leq \ell-2,n-\ell-2$,
    \[
    \mathsf{var}(f)\leq 2C(2t,t)\eps + 2^{2-t}.
    \]
\end{lemma}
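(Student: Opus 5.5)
The plan is to combine the Poincar\'e inequality (Fact~\ref{fact:poincare}) with a coupling that generates a uniformly random edge of $\mathsf{Gr}(\mathbb{F}_2^n,\ell)$ from a random $t$-restriction. Once the edge is coupled this way, Lemma~\ref{lem:rests_are_constants} bounds the edge boundary of $f$.

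\textbf{Step 1: Poincar\'e for Boolean $f$.} For $f$ taking values in $\{0,1\}$ and $(L,L')$ a uniformly random ordered edge of the Grassmann graph, we have
\[
\inner{f}{(\mathrm{I}-\mathrm{T})f} = \tfrac12\,\E\bigl[(f(L)-f(L'))^2\bigr] = \tfrac12\Pr[f(L)\neq f(L')].
\]
Fact~\ref{fact:poincare} then gives $\mathsf{var}(f)\leq \Pr[f(L)\neq f(L')]$. It therefore suffices to show $\Pr[f(L)\neq f(L')]\leq 2\,\Expect{(A,B,C)}{\Delta_{A,B,C}(f)}$, and then invoke Lemma~\ref{lem:rests_are_constants}. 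This yields
\[
\mathsf{var}(f)\leq 2C(2t,t)\eps+2^{2-t}.
\]

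\textbf{Step 2: the coupling.} Sample a uniformly random $t$-restriction $(A,B,C)$. Then sample a uniformly random ordered edge $(L_0,L_0')$ of $\mathsf{Gr}(C,t)$, and set $L=A+L_0$ and $L'=A+L_0'$.

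Since $A\cap C=\{0\}$, both $L$ and $L'$ have dimension $\ell$. Their intersection is $A+(L_0\cap L_0')$, which has dimension $\ell-1$, so $(L,L')$ is an edge.

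The distribution of $(A,B,C,L_0,L_0')$ is invariant under $GL_n(\mathbb{F}_2)$, hence so is the distribution of $(L,L')$. Moreover, $GL_n(\mathbb{F}_2)$ acts transitively on ordered pairs of $\ell$-spaces meeting in dimension $\ell-1$: extend a basis of the intersection by one vector from each side. Hence $(L,L')$ is a uniformly random ordered edge. I expect this step to be the one that needs the most care, but it is only the transitivity and invariance check above.

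\textbf{Step 3: bounding disagreements inside a restriction.} Fix $(A,B,C)$ and write $\mu=\E[f_{A,B,C}]$. Let $b\in\{0,1\}$ be the majority value of $f_{A,B,C}$. The endpoints $L_0$ and $L_0'$ are each marginally uniform on $\sqbinom{C}{t}$. If $f_{A,B,C}(L_0)\neq f_{A,B,C}(L_0')$, then at least one endpoint takes the value $1-b$. A union bound gives
\[
\Pr[f_{A,B,C}(L_0)\neq f_{A,B,C}(L_0')]\leq 2\min(\mu,1-\mu)=2\Delta_{A,B,C}(f).
\]
Averaging over $(A,B,C)$ and using Step~2 gives $\Pr[f(L)\neq f(L')]\leq 2\,\E[\Delta_{A,B,C}(f)]$. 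Combined with Step~1 and Lemma~\ref{lem:rests_are_constants}, this yields the stated bound.
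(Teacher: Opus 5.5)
Your proposal is correct and follows the paper's proof. Like the paper, you lift a random edge of $\mathsf{Gr}(C,t)$ inside a random $t$-restriction to the edge $(A+L_0,A+L_0')$ of $\mathsf{Gr}(\mathbb{F}_2^n,\ell)$. You then bound the disagreement probability above by $2\,\E[\Delta_{A,B,C}(f)]$ using Lemma~\ref{lem:rests_are_constants}, and below by $\mathsf{var}(f)$ using Fact~\ref{fact:poincare}. The paper only asserts that this lifted edge is uniform and that the disagreement probability is at most $2\Delta_{A,B,C}(f)$; your $GL_n(\mathbb{F}_2)$-invariance and transitivity argument and your union bound supply those details correctly.
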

\begin{proof}
    Sample a random $t$-restriction $(A,B,C)$, then $L,L'\subseteq C$ of dimension $t$ that intersect at dimension $t-1$, and consider $A+L$ and $A+L'$. Then
    \[
    \Prob{(A,B,C),L,L'}{f(A+L)\neq f(A+L')}
    =\Prob{(A,B,C),L,L'}{f_{A,B,C}(L)\neq f_{A,B,C}(L')}
    \leq \Expect{(A,B,C)}{2\Delta_{A,B,C}(f)},
    \]
    which is at most $2C(2t,t)\eps+2^{2-t}$ from Lemma~\ref{lem:rests_are_constants}. On the other hand, as the distribution of $(A+L, A+L')$ is of a random edge in $\mathsf{Gr}(\mathbb{F}_2^n,\ell)$, it follows that the left hand side is equal to 
    $2\inner{f}{(\mathrm{I}-\mathrm{T}_{\mathsf{Gr}(\mathbb{F}_2^n,\ell)})f}$, which by Fact~\ref{fact:poincare} is at least $\mathsf{var}(f)$. The result follows by combining the upper bound and the lower bound.
\end{proof}
In summary, we get Lemma~\ref{lem:almost_constnat_absolute} restated below, which will be used in subsequent sections.
\MainLemma*
\begin{proof}
   Applying Lemma~\ref{lem:almost_constant} we get that 
   $\E[f]$ is either at most $4C(2t,t)\eps + 2^{3-t}$, or at least $1-(4C(2t,t)\eps + 2^{3-t})$. We choose $t = 4+\log(2/\delta)$,  
   $T = t+2$ and
   $\eps = \frac{\delta}{8 C(2t,t)}$ to get the statement.
\end{proof}

\subsection{Applying Global Hypercontractivity}
In this section, we use Lemma~\ref{lem:almost_constnat_absolute} to establish Theorem~\ref{thm:main}. To do so, we first locate the set of all points/hyperplanes on which the measure of the given function becomes noticeable, and show that they capture almost all of the mass of the function $f$. Towards this end, we make use of the following global hypercontractive inequality for the Grassmann scheme, which we present next.
\begin{definition}
    For $f\colon {\sqbinom{\mathbb{F}_2^n}{\ell}}\to\mathbb{R}$ and $x\in \mathbb{F}_2^n$, we define 
    $\mu_x(f) = \Expect{L\ni x}{f(L)}$. For a hyperplane $W\subseteq \mathbb{F}_2^n$ we define $\mu_W(f)=\Expect{L\subseteq W}{f(L)}$.
\end{definition}
\begin{definition}
    We say a function $f\colon {\sqbinom{\mathbb{F}_2^n}{\ell}}\to\{0,1\}$ is $(1,\eta)$-global if for all points $x$ and all hyperplanes $W$, 
    $\mu_x(f),\mu_W(f)\leq \eta$.
\end{definition}

The following result global hypercontractive inequality is a version of the result of~\cite{KMS2} for degree $1$ on the Grassmann scheme (see also~\cite{DKKMS2,Ellis2023analogue}). In words, the result asserts that if $h$ is a Boolean function which is $(1,\eta)$-global, then only a small amount of the $L_2$-mass lies on the first two levels.
\begin{thm}\label{thm:lvl_1_ineq}
    For all $\eta>0$, there exists $T\in\mathbb{N}$, such that the following holds for integers $\ell<n$ such that 
    $\ell, n-\ell\geq T$. If $f\colon {\sqbinom{\mathbb{F}_2^n}{\ell}}\to\{0,1\}$ is $(1,\eta)$-global, then
    \[
    \norm{P_{\leq 1} f}_2^2\leq O(\eta^{1/4} \E[f]).
    \]
\end{thm}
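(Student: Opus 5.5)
We prove Theorem~\ref{thm:lvl_1_ineq} by reducing it to the level-$d$ inequality of~\cite{KMS2} for $d=1$, which controls $\norm{P_{=1}f}_2^2$ in terms of $\E[f]$ and the globalness parameters of $f$. We first need to be careful about the constant part. Since $P_{\leq 1}f=P_{=0}f+P_{=1}f$, we have $\norm{P_{\leq 1}f}_2^2=\E[f]^2+\norm{P_{=1}f}_2^2$. We bound the two terms separately.

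\textbf{The constant term.} Averaging over points, $\E[f]=\Expect{x}{\mu_x(f)}\leq\eta$, because for uniform $L$ and uniform $x$ chosen inside $L$, the point $x$ is uniform. Hence $\E[f]^2\leq \eta\E[f]\leq \eta^{1/4}\E[f]$.

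\textbf{The level-$1$ term.} We write $P_{=1}f(L)=\sum_{x\in L}a_x$ (up to the normalization forcing orthogonality to constants), where the coefficient vector $a$ is determined by the point marginals $\mu_x(f)-\E[f]$. Concretely, $\norm{P_{=1}f}_2^2$ is a quadratic form in the deviations $\mu_x(f)-\E[f]$; equivalently, it equals $\inner{f}{P_{=1}f}$, which is expressed through $\Expect{x}{\mu_x(f)\,(\text{something linear})}$.

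\textbf{Main obstacle.} The key step is to show $\inner{f}{P_{=1}f}\leq O(\eta^{1/4})\E[f]$. I would use hypercontractivity of globally bounded functions in the form of~\cite{KMS2}. Writing $h=P_{=1}f$, Hölder gives $\norm{h}_2^2=\inner{f}{h}\leq \norm{f}_{4/3}\norm{h}_4=\E[f]^{3/4}\norm{h}_4$. The global hypercontractive inequality for degree-$1$ functions bounds $\norm{h}_4^4$ by $O(1)\cdot\Expect{x}{\norm{h_{x}}_2^4}$-type quantities involving the restrictions of $h$ to points and hyperplanes, and these restricted $2$-norms are controlled by $\mu_x(f)\leq\eta$ and $\mu_W(f)\leq\eta$. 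Here $T$ is needed so that the relevant eigenvalue ratios are bounded by constants.

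\textbf{Combining.} Carrying this out, I expect $\norm{h}_4\leq O(\eta^{1/4}\E[f]^{1/4})\cdot$ (a factor absorbed by $\norm{h}_2^{1/2}$ after a bootstrapping step). Solving the resulting inequality for $\norm{h}_2^2$ gives $\norm{h}_2^2\leq O(\eta^{1/4}\E[f])$. The exponent $1/4$ is precisely what the $L_4$ Hölder step produces.

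Exact bookkeeping of the restriction norms against $\mu_x,\mu_W$ is the part I would check most carefully: it is where both point- and hyperplane-globalness enter, matching the two types of zoom-ins and zoom-outs in the Grassmann scheme.
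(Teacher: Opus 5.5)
Your treatment of the constant term matches the paper ($\E[f]\le\eta$, so $\E[f]^2\le\eta\E[f]$), and the Hölder step $\norm{h}_2^2=\inner{f}{h}\le\E[f]^{3/4}\norm{h}_4$ is correct. There is a genuine gap in everything after that. You invoke a global hypercontractive inequality for $h\in J_{=1}$ on $\sqbinom{\mathbb{F}_2^n}{\ell}$ that bounds $\norm{h}_4^4$ by restriction norms to points and hyperplanes. That inequality is neither proved in your sketch nor available from \cite{KMS2} in that form: the hypercontractive machinery there (and in \cite{Ellis2023analogue}) lives on the bilinear scheme $\mathbb{F}_2^{n\times\ell}$, where characters make fourth moments tractable.

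On the Grassmann scheme, write $h=\sum_x\alpha_x1_{x\in L}$ with $\alpha_x\approx\mu_x(f)-\E[f]$. The fourth moment is then $\sum\alpha_x\alpha_y\alpha_z\alpha_w\Pr[x,y,z,w\in L]$, which depends on the linear dependencies among the four points. Hyperplane-concentrated coefficients make this large even when every $|\alpha_x|$ is tiny. For example, $f=1_{L\subseteq W}$ gives $\alpha_x\propto 1_{x\in W}-\tfrac12$ with $|\alpha_x|\approx 2^{-\ell}$, yet $\norm{h}_4^4/\norm{h}_2^4\approx 2^{\ell}$. So proving an inequality of your type that correctly encodes hyperplane-globalness is the whole difficulty, not bookkeeping. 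Likewise, your closing bound $\norm{h}_4\le O(\eta^{1/4}\E[f]^{1/4})\norm{h}_2^{1/2}$ and the ``bootstrapping'' step are guessed rather than derived, and your remark that $T$ controls ``eigenvalue ratios'' does not correspond to any step you actually carry out.

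The missing idea is the transfer to the bilinear scheme, which is what the paper does:
\begin{enumerate}
\item Lift $f$ to $f^{\star}(M)=f(\mathsf{span}(\mathsf{col}_1(M),\ldots,\mathsf{col}_\ell(M)))$ on full-rank $M$.
\item Use that $(1,\eta)$-globalness of $f$ gives $(1,\eta+2^{\ell+1-n})$-pseudorandomness of $f^{\star}$. This is where $n-\ell\ge T$ is actually used.
\item Apply \cite[Theorem 2.13]{KMS2} to get $\norm{P_{=1}f^{\star}}_2^2\le 2^{10}\eta^{1/4}\E[f^{\star}]$.
\item Prove the comparison $\norm{P_{=1}f}_2^2\le O(\norm{P_{=1}f^{\star}}_2^2+2^{\ell-n}\E[f]^2)$. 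The paper does this by expressing the rank-one coefficients $\widehat{f^{\star}}(u\otimes v)$ through the hyperplane averages $\mathrm{S}f(\mathsf{span}(u)^\perp)$, and checking that $\norm{\mathrm{S}P_{=1}f}_2^2=\Theta(2^{\ell-n})\norm{P_{=1}f}_2^2$.
\end{enumerate}
Without this reduction, or a genuine proof of a Grassmann-native degree-$1$ hypercontractive inequality handling both point- and hyperplane-type concentration, your argument does not establish the theorem.
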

\begin{proof}
    The proof is by reduction to the corresponding result over the bilinear scheme, and is deferred to Section~\ref{sec:apx}.
\end{proof}

Fix $f\colon {\sqbinom{\mathbb{F}_2^n}{\ell}}\to\{0,1\}$, and without loss of generality $\E[f]\leq 1/2$ (otherwise we switch to $1-f$). We intend to use Theorem~\ref{thm:lvl_1_ineq}, and towards this end for a parameter $\eta>0$ we define
\[
\mathcal{X}_{\eta}(f) = \sett{x\in\mathbb{F}_2^n}{\mu_x(f)> \eta},
\qquad\qquad
\mathcal{W}_{\eta}(f)
=
\sett{W\subseteq\mathbb{F}_2^n\text{ hyperplane}}{\mu_W(f)> \eta}.
\]
For $L\subseteq \mathbb{F}_2^n$, let $E_{\eta}$ be the event that $x\in L$ for some $x\in \mathcal{X}_{\eta}(f)$, 
or $L\subseteq W$ for some $W\in \mathcal{W}_{\eta}(f)$. We define $h_{\eta}(L) = f(L) 1_{\overline{E_{\eta}}}(L)$. We have the following observation.

\begin{claim}\label{claim:h_global}
    The function $h$ is $(1,\eta)$-global.
\end{claim}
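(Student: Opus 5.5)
The plan is to verify the two globality conditions of Claim~\ref{claim:h_global} separately, one for points and one for hyperplanes. In each case we split according to whether the point or hyperplane is one of the ``heavy'' ones removed by $E_\eta$ or not. Two facts drive everything. First, $h_\eta\le f$ pointwise, because $h_\eta = f\cdot 1_{\overline{E_\eta}}$ and $f$ takes values in $\{0,1\}$. Second, $h_\eta$ vanishes identically on every $L$ in the event $E_\eta$. Since $\mu_x$ and $\mu_W$ are averages of a nonnegative function over a fixed family of $L$'s, they are monotone with respect to pointwise domination.

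For a point $x$, we distinguish two cases.
\begin{itemize}
\item If $x\in\mathcal{X}_\eta(f)$, then every $L\ni x$ satisfies $E_\eta$. Hence $h_\eta(L)=0$ for all such $L$, and $\mu_x(h_\eta)=0\le\eta$.
\item If $x\notin\mathcal{X}_\eta(f)$, then by definition $\mu_x(f)\le\eta$. Monotonicity then gives $\mu_x(h_\eta)\le\mu_x(f)\le\eta$.
\end{itemize}
The hyperplane case is identical. If $W\in\mathcal{W}_\eta(f)$, then every $L\subseteq W$ lies in $E_\eta$, so $\mu_W(h_\eta)=0$. Otherwise $\mu_W(h_\eta)\le\mu_W(f)\le\eta$.

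There is no real obstacle here. The only point needing care is the bookkeeping of the strict versus non-strict inequalities: the sets are defined with ``$>\eta$'', so their complements give exactly ``$\le\eta$'', which matches the definition of $(1,\eta)$-global. We also note that $h_\eta$ is Boolean, so it is a legitimate input for Theorem~\ref{thm:lvl_1_ineq}.
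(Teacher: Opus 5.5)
Your proof is correct and matches the paper's argument. For points and hyperplanes in $\mathcal{X}_\eta(f)$ or $\mathcal{W}_\eta(f)$, the averages of $h_\eta$ vanish because $h_\eta$ is zero on $E_\eta$. For all remaining ones, $\mu(h_\eta)\le\mu(f)\le\eta$ follows from the pointwise bound $h_\eta\le f$.
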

\begin{proof}
    For $x\in \mathcal{X}_{\eta}(f)$ we have that $\mu_x(h) = 0$. For any other $x$, we have $\mu_x(h)\leq \mu_x(f)\leq \eta$. The same argument shows that $\mu_W(h)\leq \eta$ for all hyperplanes $W$.
\end{proof}
Thus, to show the main result, we will prove that if $f$ is close to a degree $1$, then:
\begin{enumerate}
    \item For small $\eta$, the function $h_{\eta}$ is close to being $0$. Thus, 
    $f$ is close to $f1_{E_{\eta}}(L)$. This morally allows us to decompose the set of $L$ such that $f(L) = 1$ into chunks, such that the measure of $f$ in each chunk is noticeable. Here, a chunk is defined by a point $x$ or a hyperplane $W$, and corresponds to the subspaces that contain $x$ / subspaces that are contained in $W$. 
    \item We show that in almost all chunks, the measure of $f$ becomes close to $1$. Towards this end, we show that $\mathcal{X}_{\eta}(f)\setminus \mathcal{X}_{1-\eta}(f)$ is very small, which means that for most $x\in \mathcal{X}_{\eta}(f)$, $\mu_x(f)$ is close to $1$. Analogously, we show that either $\mathcal{W}_{\eta}(f)\setminus \mathcal{W}_{1-\eta}(f)$ is very small, implying that for almost all $W\in \mathcal{W}_{\eta}(f)$, $\mu_W(f)$ is close to $1$.
\end{enumerate}

The first step is established in the following claim.

\begin{claim}\label{claim:h_close_to_0}
   For all $\eps>0$, there exists $T\in\mathbb{N}$ such that the following holds for $\ell,n-\ell\geq T$. Suppose that $\norm{f-P_{\leq 1}f}_2\leq \eps \norm{f}_2$. Then
   \[
   \norm{h_{\eta}}_2
   \leq (\sqrt{\eps}+O(\eta^{1/16}))\norm{f}_2.
   \]
\end{claim}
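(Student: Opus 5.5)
The plan is to exploit that $h_\eta$ is a Boolean function lying pointwise below $f$, so that its squared norm can be expressed as an inner product with $f$. We then split $f$ into its degree-$1$ part and the remainder, and control the degree-$1$ contribution using the global hypercontractive inequality, Theorem~\ref{thm:lvl_1_ineq}. We take $T$ to be the threshold given by Theorem~\ref{thm:lvl_1_ineq} for the parameter $\eta$. If $\eps\geq 1$ the claim is trivial, since $\norm{h_\eta}_2\leq\norm{f}_2$ because $0\leq h_\eta\leq f$. So we assume $\eps<1$.

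First, since $h_\eta=f\cdot 1_{\overline{E_\eta}}$ takes values in $\{0,1\}$ and $h_\eta\leq f$, we have $h_\eta f=h_\eta$. Hence $\norm{h_\eta}_2^2=\E[h_\eta]=\inner{h_\eta}{f}$. Writing $f=P_{\leq 1}f+(f-P_{\leq 1}f)$, we get
\[
\norm{h_\eta}_2^2=\inner{h_\eta}{P_{\leq 1}f}+\inner{h_\eta}{f-P_{\leq 1}f}.
\]

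The second term is at most $\norm{h_\eta}_2\,\eps\norm{f}_2$ by Cauchy--Schwarz and the hypothesis. For the first term, we use that $P_{\leq 1}$ is an orthogonal projection, hence self-adjoint, to rewrite it as $\inner{P_{\leq 1}h_\eta}{f}\leq\norm{P_{\leq 1}h_\eta}_2\norm{f}_2$.

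By Claim~\ref{claim:h_global}, $h_\eta$ is $(1,\eta)$-global, so Theorem~\ref{thm:lvl_1_ineq} applies to it. It gives
\[
\norm{P_{\leq 1}h_\eta}_2^2\leq O(\eta^{1/4}\E[h_\eta])=O(\eta^{1/4}\norm{h_\eta}_2^2),
\]
that is, $\norm{P_{\leq 1}h_\eta}_2\leq O(\eta^{1/8})\norm{h_\eta}_2$.

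Combining the two bounds yields
\[
\norm{h_\eta}_2^2\leq \bigl(\eps+O(\eta^{1/8})\bigr)\norm{h_\eta}_2\norm{f}_2.
\]
Dividing by $\norm{h_\eta}_2$ (the case $h_\eta\equiv 0$ being trivial) gives $\norm{h_\eta}_2\leq(\eps+O(\eta^{1/8}))\norm{f}_2$. Since $\eps<1$, we have $\eps\leq\sqrt{\eps}$, and since $\eta\leq 1$ we may take $\eta<1$ (for $\eta\geq 1$ the claim is trivial), so $\eta^{1/8}\leq\eta^{1/16}$. This gives the claimed bound.

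I do not expect a real obstacle here. The only point needing care is the self-adjointness step, which moves the projection from $f$ onto $h_\eta$ so that globality of $h_\eta$ can be used. This is what lets us avoid any assumption on $f$ itself beyond closeness to degree $1$. The stated exponents $\sqrt{\eps}$ and $\eta^{1/16}$ appear to leave slack, which is harmless.
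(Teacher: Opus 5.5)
Your proof is correct and follows the paper's argument. Self-adjointness of $P_{\leq 1}$ together with Theorem~\ref{thm:lvl_1_ineq}, applied to the $(1,\eta)$-global function $h_\eta$, controls $\inner{h_\eta}{P_{\leq 1}f}$; Cauchy--Schwarz handles $\inner{h_\eta}{f-P_{\leq 1}f}$; and $\inner{f}{h_\eta}=\norm{h_\eta}_2^2$ closes the loop. The only difference is that the paper replaces $\norm{h_\eta}_2$ by $\norm{f}_2$ on the right-hand side and then takes a square root, which is where $\sqrt{\eps}$ and $\eta^{1/16}$ come from. You instead divide by $\norm{h_\eta}_2$, obtaining the slightly stronger bound $(\eps+O(\eta^{1/8}))\norm{f}_2$ before relaxing it to the stated one.
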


\begin{proof}
    Set $g=P_{\leq 1} f$. Then we have that
    \[
    \inner{g}{h_{\eta}}
    =\inner{P_{\leq 1} f}{h_{\eta}}
    =\inner{f}{P_{\leq 1} h_{\eta}}
    \leq \norm{f}_2\norm{P_{\leq 1}h_{\eta}}_2,
    \]
    where the last inequality is by Cauchy--Schwarz. Using Claim~\ref{claim:h_global} and Theorem~\ref{thm:lvl_1_ineq} gives that 
    $\norm{P_{\leq 1}h_{\eta}}_2\leq O(\eta^{1/8}\norm{h_{\eta}}_2)
    \leq O(\eta^{1/8}\norm{f}_2)$, and plugging this above gives
    \begin{equation}\label{eq:1}
    \inner{g}{h_{\eta}}\leq 
    O\left(\eta^{1/8}\norm{f}_2^2\right).
    \end{equation}
    Next, note that
    \begin{equation}\label{eq:2}
    \inner{g}{h_{\eta}}
    = 
    \inner{f}{h_{\eta}}
    -
    \inner{f-g}{h_{\eta}}
    \geq 
    \inner{f}{h_{\eta}}
    -\norm{f-g}_2\norm{h_{\eta}}_2
    \geq 
    \inner{f}{h_{\eta}}
    -\eps \norm{f}_2^2,
    \end{equation}
    where in the second transition we used Cauchy--Schwarz and in the last transition we used 
    $\norm{f-g}_2\leq \eps \norm{f}_2$ and $\norm{h_{\eta}}\leq \norm{f}_2$. Combining~\eqref{eq:1} and~\eqref{eq:2} gives that 
    $\inner{f}{h_{\eta}}\leq (\eps+O(\eta^{1/8}))\norm{f}_2^2$. Noting that $\inner{f}{h_{\eta}}=\norm{h_{\eta}}_2^2$ finishes the proof.
\end{proof}

We now establish the second step.
\begin{claim}\label{claim:get_zoom_in_to_1}
    For all $\delta,\xi\in(0,1)$, there exist $T\in\mathbb{N}$ and $\eps>0$ such that the following holds for $\ell,n-\ell\geq T$.
    Suppose $\norm{f-P_{\leq 1}f}_2\leq \eps\norm{f}_2$ and that $\mathcal{X}\subseteq \mathcal{X}_{\delta}(f)$ has size $\xi\E[f] \frac{2^{n}-1}{2^{\ell}-1}$. Then there exists $x\in \mathcal{X}$ such that
    \[
    \mu_x(f)\geq 1-\delta.
    \]
\end{claim}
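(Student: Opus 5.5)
The plan is to zoom in on a point $x\in\mathcal{X}$ and apply Lemma~\ref{lem:almost_constnat_absolute} to the restriction of $f$ to the star of $x$. For a nonzero point $x$, identify $\sett{L\in\sqbinom{\mathbb{F}_2^n}{\ell}}{x\in L}$ with $\sqbinom{\mathbb{F}_2^n/\langle x\rangle}{\ell-1}$ via $L\mapsto L/\langle x\rangle$. Define $f_x(L/\langle x\rangle)=f(L)$ and similarly $g_x$ for $g=P_{\leq 1}f$. Then $\E[f_x]=\mu_x(f)$.

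First, $g_x$ is of degree $1$ on the quotient Grassmann scheme, by the same argument as Fact~\ref{fact:rest_deg_1}. It suffices to check this on the spanning functions $1_{y\in L}$. If $y=x$ the restriction is the constant $1$. Otherwise $y\in L$ if and only if $y+\langle x\rangle\in L/\langle x\rangle$, which is a degree-$1$ indicator on $\sqbinom{\mathbb{F}_2^n/\langle x\rangle}{\ell-1}$. The restricted scheme has parameters $\ell-1$ and $(n-1)-(\ell-1)=n-\ell$, so both are at least $T-1$.

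Next comes the averaging step, which I regard as the main point. Set $N=\frac{2^n-1}{2^\ell-1}$. A uniformly random $L$ together with a uniformly random nonzero $x\in L$ gives a uniformly random $x$ and, conditionally on it, a uniform $L\ni x$. Hence for any nonnegative $F$,
\[
\sum_{x}\Expect{L\ni x}{F(L)}=N\,\E[F].
\]
Applying this with $F=(f-g)^2$, and using nonnegativity to restrict the sum to $\mathcal{X}$, gives
\[
\sum_{x\in\mathcal{X}}\norm{f_x-g_x}_2^2\leq N\norm{f-g}_2^2\leq N\eps^2\E[f].
\]
Here $\norm{f}_2^2=\E[f]$ because $f$ is Boolean. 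Dividing by $|\mathcal{X}|=\xi\E[f]N$ shows that the average over $x\in\mathcal{X}$ of $\norm{f_x-g_x}_2^2$ is at most $\eps^2/\xi$. So some $x\in\mathcal{X}$ satisfies $\norm{f_x-g_x}_1\leq\norm{f_x-g_x}_2\leq \eps/\sqrt{\xi}$. The key observation is that the normalization $|\mathcal{X}|\propto \E[f]$ exactly cancels the relative error $\norm{f-g}_2^2\leq\eps^2\E[f]$, giving an absolute $L_1$ bound on the restricted function.

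Finally, let $\eps_0(\delta)$ and $T_0(\delta)$ be the parameters of Lemma~\ref{lem:almost_constnat_absolute} for this $\delta$. Choose $\eps=\eps_0\sqrt{\xi}$ and $T=T_0+1$. Then for the chosen $x$, the lemma applied to $f_x$ yields $\mu_x(f)=\E[f_x]\leq\delta$ or $\mu_x(f)\geq 1-\delta$. Since $x\in\mathcal{X}_\delta(f)$, we have $\mu_x(f)>\delta$, so the first alternative is excluded and $\mu_x(f)\geq 1-\delta$.

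No real obstacle remains beyond checking the double-counting identity and that restriction to a star preserves degree $1$. The analogous statement for hyperplanes would use $\sqbinom{W}{\ell}$ in place of the quotient, with the same counting.
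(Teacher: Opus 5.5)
\textbf{There is a genuine gap in your averaging step.} Your double-counting identity is off by a factor of $2^\ell-1$. Every $L$ contains $2^\ell-1$ nonzero points, so it lies in $2^\ell-1$ stars. Your own probabilistic description therefore gives $\frac{1}{2^n-1}\sum_{x\neq 0}\Expect{L\ni x}{F(L)}=\E[F]$, that is, $\sum_{x}\Expect{L\ni x}{F(L)}=(2^n-1)\E[F]$, not $\frac{2^n-1}{2^\ell-1}\E[F]$.

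With the correct constant, dividing by $|\mathcal{X}|=\xi\E[f]\frac{2^n-1}{2^\ell-1}$ bounds the average of $\norm{f_x-g_x}_2^2$ over $\mathcal{X}$ only by $(2^\ell-1)\eps^2/\xi$. This is useless, because $\ell\geq T$ is unbounded while $\eps$ may depend only on $\delta,\xi$. The loss comes from adding the terms with $x\notin\mathcal{X}$. The exact quantity is $\sum_{x\in\mathcal{X}}\Expect{L\ni x}{F}=\frac{2^n-1}{2^\ell-1}\E[F\cdot Z]$ with $Z(L)=\card{L\cap\mathcal{X}}$. Summing over all $x$ replaces $Z$ by its maximum $2^\ell-1$, and that is exactly what destroys the cancellation you call the key observation.

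\textbf{The missing idea} is to keep the weight $Z$ and control it through its second moment, which is the paper's argument. We have $\E[Z]=\xi\E[f]$. By Claim~\ref{claim:second-moment-bounds}, which relies on $\Pr[x,y\in L]\leq \Pr[x\in L]^2$ for distinct $x,y$, also $\E[Z^2]\leq\E[Z]+\E[Z]^2\leq 2\xi\E[f]$. Cauchy--Schwarz then gives
\[
\Expect{x\in\mathcal{X}}{\Expect{L\ni x}{\card{f-g}}}=\frac{\E[Z\card{f-g}]}{\xi\E[f]}\leq\frac{\norm{f-g}_2\norm{Z}_2}{\xi\E[f]}\leq \eps\sqrt{2/\xi}.
\]
This has to be done in $L_1$, not with $F=(f-g)^2$. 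Bounding $\E[(f-g)^2Z]$ by Cauchy--Schwarz would require control of $\norm{f-g}_4$, which you do not have. This is why Lemma~\ref{lem:almost_constnat_absolute} is stated with an $L_1$ hypothesis.

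The rest of your proof is fine and matches the paper up to cosmetic choices (quotient versus complementary hyperplane). That covers the identification of the star with $\sqbinom{\mathbb{F}_2^n/\langle x\rangle}{\ell-1}$, preservation of degree $1$, the choice $T=T_0+1$, and ruling out $\mu_x(f)\leq\delta$.
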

\begin{proof}
    Fix $\delta,\xi>0$ and take $T,\eps$ from Lemma~\ref{lem:almost_constnat_absolute}. We pick $\eps' = \eps\sqrt{\xi}/10$ in the claim.
    Let $g = P_{\leq 1}f$. 
    For a subspace $L\subseteq\mathbb{F}_2^n$ define $Z[L] = \card{L\cap \mathcal{X}}$. 
    Then $\Expect{L}{Z[L]}=\xi\E[f]$ and 
    $\Expect{L}{Z[L]^2}\leq
    \E[Z]+\E[Z]^2\leq 10\xi \E[f]$.
    Letting $N = \card{{\sqbinom{\mathbb{F}_2^n}{\ell}}}$, we get
    \begin{align*}
    \Expect{x\in \mathcal{X}}{\Expect{L\ni x}{\card{f(L)-g(L)}}}
    &\leq 
    \sum\limits_{L}\frac{Z(L)}{N\xi\E[f]}\card{f(L)-g(L)}\\
    &\leq 
    \frac{1}{\xi\E[f]}\sqrt{\sum\limits_{L}\frac{1}{N}\card{f(L)-g(L)}^2}
    \sqrt{\sum\limits_{L}\frac{Z(L)^2}{N}}\\
    &=\frac{1}{\xi\E[f]}\norm{f-g}_2\sqrt{\Expect{L}{Z[L]^2}}\\
    &\leq \frac{1}{\xi\E[f]}
    \eps'\norm{f}_2 \sqrt{10\xi\E[f]},
    \end{align*}
    which is at most $\eps$.
    It follows that there exists $x\in \mathcal{X}$ such that 
    $\Expect{L\ni x}{\card{f(L)-g(L)}}\leq \eps$. Fixing a hyperplane $B$ such that $\mathsf{span}(x)+B = \mathbb{F}_2^n$, and defining 
    $f_x(L') = f(\mathsf{span}(x)+L')$, 
    $g_x(L') = g(\mathsf{span}(x)+L')$, we have
    \[
    \norm{f_x-g_x}_1=\Expect{L'\in \sqbinom{B}{\ell-1}}{\card{f_x(L')-g_x(L')}}\leq \eps.
    \]
    As $g\in J_{\leq 1}$, by the same argument as in Fact~\ref{fact:rest_deg_1} we have that $g_x\in J_{\leq 1}$. It follows from Lemma~\ref{lem:almost_constnat_absolute} that either $\E[f_{x}]\geq 1-\delta$ or $\E[f_x]\leq \delta$, and as $\E[f_x] = \mu_x(f)> \delta$, we conclude it must be the former case.
\end{proof}
We have the following analogous statement for $\mathcal{W}_{\delta}(f)$.
\begin{claim}\label{claim:get_zoom_ot_to_1}
    For all $\delta,\xi \in(0,1)$, there exist $T\in\mathbb{N}$ and $\eps>0$ such that the following holds for $\ell,n-\ell\geq T$.
    Suppose $\norm{f-P_{\leq 1}f}_2\leq \eps\norm{f}_2$ and that $\mathcal{W}\subseteq \mathcal{W}_{\delta}(f)$ has size $\xi\E[f] \frac{2^{n}-1}{2^{n-\ell}-1}$. Then there exists $W\in \mathcal{W}$ such that
    \[
    \mu_W(f)\geq 1-\delta.
    \]
\end{claim}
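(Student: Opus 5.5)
The plan is to mirror the proof of Claim~\ref{claim:get_zoom_in_to_1}, replacing points by hyperplanes and "containing $x$" by "contained in $W$". We fix $\delta,\xi$, take $T,\eps$ from Lemma~\ref{lem:almost_constnat_absolute} (applied to dimensions $(n-1,\ell)$, so we require $\ell, n-1-\ell\geq T$, i.e.\ we enlarge $T$ by one), and set the closeness parameter in the claim to $\eps'=\eps\sqrt{\xi}/10$. Let $g=P_{\leq 1}f$, and for $L$ define $Z[L]=\card{\sett{W\in\mathcal{W}}{L\subseteq W}}$. Since $\Pr[L\subseteq W]=\frac{2^{n-\ell}-1}{2^n-1}$, we get $\E[Z]=\xi\E[f]$. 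Claim~\ref{claim:second-moment-bounds} then gives $\E[Z^2]\leq \E[Z]+\E[Z]^2\leq 2\xi\E[f]$, using $\xi\E[f]\leq 1$.

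Next, averaging over $W\in\mathcal{W}$ and then over uniform $L\subseteq W$ is the same as weighting $L$ by $Z[L]/(N\xi\E[f])$. Cauchy--Schwarz gives the bound
\[
\Expect{W\in\mathcal{W}}{\Expect{L\subseteq W}{\card{f(L)-g(L)}}}\leq \frac{1}{\xi\E[f]}\norm{f-g}_2\sqrt{\E[Z^2]}\leq \frac{\eps'\norm{f}_2\sqrt{10\xi\E[f]}}{\xi\E[f]}\leq \eps,
\]
using $\norm{f}_2=\sqrt{\E[f]}$. Hence some $W\in\mathcal{W}$ satisfies $\norm{f|_W-g|_W}_1\leq\eps$, where $f|_W,g|_W$ denote the restrictions to $\sqbinom{W}{\ell}\cong\sqbinom{\mathbb{F}_2^{n-1}}{\ell}$.

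The one point that needs an actual check, rather than copying, is that $g|_W$ has degree at most $1$ on $\sqbinom{W}{\ell}$. By linearity it suffices to treat $1_{x\in L}$. For $x\in W$ this restricts to the degree-$1$ function $1_{x\in L}$ on $\sqbinom{W}{\ell}$. For $x\notin W$ it restricts to $0$, since no $L\subseteq W$ contains $x$. So $g|_W\in J_{\leq 1}$.

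Finally, Lemma~\ref{lem:almost_constnat_absolute} applied to $f|_W$ gives $\mu_W(f)=\E[f|_W]\leq\delta$ or $\geq 1-\delta$. Since $W\in\mathcal{W}_\delta(f)$ means $\mu_W(f)>\delta$, the second alternative holds. The only other care needed is that the dimension hypotheses $\ell,(n-1)-\ell\geq T$ are met, which we ensure by the choice of $T$; beyond this I expect no real obstacle.
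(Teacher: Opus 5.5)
Your proposal is correct and follows the paper's proof essentially verbatim. Both use the weighting $Z[L]/(N\xi\E[f])$ together with the second-moment bound of Claim~\ref{claim:second-moment-bounds} and Cauchy--Schwarz to find a good $W$, check that $g$ restricted to $\sqbinom{W}{\ell}$ is still degree $1$, and apply Lemma~\ref{lem:almost_constnat_absolute} to $f|_W$. Your explicit verification of the restriction for $x\notin W$ and the adjustment of $T$ by one (since $\dim W=n-1$) are details the paper leaves implicit.
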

\begin{proof}
    Fix $\delta,\xi>0$ and take $T,\eps$ from Lemma~\ref{lem:almost_constnat_absolute}. We pick $\eps' = \frac{\eps\sqrt{\xi}}{10}$.
    Let $g = P_{\leq 1}f$. 
    For a subspace $L\subseteq\mathbb{F}_2^n$ define $Z[L] = \#\sett{W\in \mathcal{W}}{L\subseteq W}$. 
    Then $\Expect{L}{Z[L]}=\xi\E[f]$ and 
    $\Expect{L}{Z[L]^2}\leq 10\xi \E[f]$.
    By the same computation as in Claim~\ref{claim:get_zoom_in_to_1} we get that $\Expect{W\in \mathcal{W}}{\Expect{L\subseteq W}{\card{f(L)-g(L)}}}\leq \eps$. It follows that there exists $W\in \mathcal{W}$ such that 
    $\Expect{L\subseteq W}{\card{f(L)-g(L)}}\leq \eps$. Defining, $f_W, g_W\colon {\sqbinom{W}{\ell}}\to\mathbb{R}$ by
    $f_W(L') = f(L')$, 
    $g_W(L') = g(L')$, we have
    \[
    \norm{f_W-g_W}_1=\Expect{L'\in {\sqbinom{W}{\ell}}}{\card{f_W(L')-g_W(L')}}\leq \eps.
    \]
    As $g\in J_{\leq 1}$, by the same argument as in Fact~\ref{fact:rest_deg_1} we have that $g_W\in J_{\leq 1}$. It follows from Lemma~\ref{lem:almost_constnat_absolute} that either $\E[f_{W}]\geq 1-\delta$ or $\E[f_W]\leq \delta$, and as $\E[f_W] = \mu_W(f)> \delta$, we conclude it must be the former case.
\end{proof}

We use the above two claims to show that the sets 
$\mathcal{X}_{\delta}(f)\setminus \mathcal{X}_{1-2\delta}(f)$, $\mathcal{W}_{\delta}(f)\setminus \mathcal{W}_{1-2\delta}(f)$ must have size which is small, relative to the measure of $f$.
\begin{lemma}\label{lem:most_are_close_to_1}
  For all $\delta,\xi\in(0,1)$, there exist $T$ and $\eps$ such that the following holds for $\ell,n-\ell\geq T$. Suppose  
  $\norm{f-P_{\leq 1}f}_2\leq \eps\norm{f}_2$, then
  \[
  \card{\mathcal{X}_{\delta}(f)\setminus \mathcal{X}_{1-2\delta}(f)}<\xi\E[f] 2^{n-\ell}
  \qquad
  \card{\mathcal{W}_{\delta}(f)\setminus \mathcal{W}_{1-2\delta}(f)}<\xi\E[f] 2^{\ell}.
  \]
\end{lemma}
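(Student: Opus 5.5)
This follows by contradiction from Claims~\ref{claim:get_zoom_in_to_1} and~\ref{claim:get_zoom_ot_to_1}: if the exceptional set is large, take a subset of exactly the prescribed size. Those claims then produce an element of the subset with conditional mean at least $1-\delta$, contradicting membership in $\mathcal{X}_{\delta}(f)\setminus\mathcal{X}_{1-2\delta}(f)$, whose elements satisfy $\mu_x(f)\leq 1-2\delta<1-\delta$. The hyperplane case is identical with the roles of $\ell$ and $n-\ell$ swapped.

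Fix $\delta,\xi$. Set $\xi'=\xi/2$ and let $T_1,\eps_1$ be given by Claim~\ref{claim:get_zoom_in_to_1} with parameters $(\delta,\xi')$. Let $T_2,\eps_2$ be given by Claim~\ref{claim:get_zoom_ot_to_1} with the same parameters. Take $T=\max(T_1,T_2)$ and $\eps=\min(\eps_1,\eps_2)$.

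Suppose $\card{\mathcal{X}_{\delta}(f)\setminus\mathcal{X}_{1-2\delta}(f)}\geq \xi\E[f]2^{n-\ell}$. Since $\frac{2^n-1}{2^\ell-1}\leq 2^{n-\ell}\cdot\frac{1}{1-2^{-\ell}}\leq 2\cdot 2^{n-\ell}$, we have
\[
\xi'\E[f]\tfrac{2^n-1}{2^\ell-1}\leq \xi\E[f]2^{n-\ell}.
\]
So we can pick a subset $\mathcal{X}\subseteq\mathcal{X}_{\delta}(f)\setminus\mathcal{X}_{1-2\delta}(f)$ of size $\xi'\E[f]\frac{2^n-1}{2^\ell-1}$. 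Claim~\ref{claim:get_zoom_in_to_1} then yields $x\in\mathcal{X}$ with $\mu_x(f)\geq 1-\delta$, but $x\notin\mathcal{X}_{1-2\delta}(f)$ means $\mu_x(f)\leq 1-2\delta$, a contradiction. The same argument with Claim~\ref{claim:get_zoom_ot_to_1} and $\frac{2^n-1}{2^{n-\ell}-1}\leq 2\cdot 2^{\ell}$ handles $\mathcal{W}$.

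The only delicate point is integrality: the prescribed size need not be an integer. I would handle this by noting that the proofs of the two claims only use the size through $\E[Z]$ being comparable to $\xi'\E[f]$. Concretely, they need $\E[Z]\geq \xi'\E[f]$ for the normalization, and $\E[Z^2]\leq \E[Z]+\E[Z]^2=O(\xi'\E[f])$, which requires $\E[Z]\lesssim 1$. Both hold if we take any subset whose size lies between the prescribed value and twice it, possibly adjusting $\xi'$ or the constant $10$.

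If the prescribed size is below $1$, then $\xi\E[f]2^{n-\ell}<2$, and the set being nonempty already gives size at least $1$. In that case apply the argument to a single point. The Cauchy--Schwarz bound in Claim~\ref{claim:get_zoom_in_to_1} is then weaker: with $\E[Z]=p\approx\xi'\E[f]\cdot(\text{small})$, the bound reads $\norm{f-g}_2\sqrt{\E[Z^2]}/\E[Z]$. Since $\norm{f-g}_2\leq \eps\sqrt{\E[f]}$ and $\E[Z^2]\approx\E[Z]$, this is at most $\eps\sqrt{\E[f]/\E[Z]}$. That requires $\E[Z]\gtrsim\E[f]$, which is exactly the case where the prescribed size is at least $\xi'$ times something, so the rounding up only helps.

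I expect this bookkeeping, rather than any new idea, to be the main obstacle.
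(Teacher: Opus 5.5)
Your proposal is correct and matches the paper's proof. The paper also applies Claims~\ref{claim:get_zoom_in_to_1} and~\ref{claim:get_zoom_ot_to_1} with parameter $\xi/2$, where the factor $2$ absorbs $\frac{2^n-1}{2^\ell-1}\leq 2\cdot 2^{n-\ell}$, then extracts a subset of the exceptional set and gets a contradiction with $\mu_x(f)\leq 1-2\delta<1-\delta$. Your integrality bookkeeping, which the paper glosses over, is even easier than you suggest: since $\sqrt{\E[Z^2]}/\E[Z]\leq\sqrt{1+1/\E[Z]}$ decreases in $\E[Z]$, the claims' proofs work for any subset with $\E[Z]\geq\frac{\xi}{2}\E[f]$, so you can simply take the ceiling of the prescribed size or even the whole exceptional set.
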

\begin{proof}
    Fix $\delta,\xi>0$, pick $T_1,\eps_1$ from Claim~\ref{claim:get_zoom_in_to_1}, $T_2,\eps_2$ from Claim~\ref{claim:get_zoom_ot_to_1} for $\delta$ and $\xi/2$, and choose $T = \max(T_1,T_2)$, $\eps = \min(\eps_1,\eps_2)$. Note that if 
    $\card{\mathcal{X}_{\delta}(f)\setminus \mathcal{X}_{1-2\delta}(f)}\geq \xi\E[f] 2^{n-\ell}$, then we could find 
    $\mathcal{X}\subseteq \mathcal{X}_{\delta}(f)$ of size at $\xi\E[f]2^{n-\ell}$ such that $\mu_x(f)\leq 1-2\delta$ for all $x\in\mathcal{X}$, but this contradicts Claim~\ref{claim:get_zoom_in_to_1}. This establishes the first assertion, and the second one follows similarly.
\end{proof}

For technical reasons, we also need a simple upper bound on the size of the sets $\mathcal{X}_{1-2\delta}$, 
$\mathcal{W}_{1-2\delta}$ given in the following lemma.
\begin{lemma}\label{lem:trivial_upper_bound_on_sizes}
    For $f\colon {\sqbinom{\mathbb{F}_2^n}{\ell}}\to\{0,1\}$ with $\E[f]\leq 1/2$ and $\delta\leq 1/8$ we have that 
    \[
    \card{\mathcal{X}_{1-2\delta}(f)}\leq O(\E[f] 2^{n-\ell}),
    \qquad 
    \card{\mathcal{W}_{1-2\delta}(f)}\leq O(\E[f] 2^{\ell}).
    \]
\end{lemma}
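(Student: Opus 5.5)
We bound $\card{\mathcal{X}_{1-2\delta}(f)}$ by double counting through $Z(L)=\card{L\cap\mathcal{X}}$ with $\mathcal{X}=\mathcal{X}_{1-2\delta}(f)$, and the hyperplane bound is the same argument with $Z(L)=\#\sett{W\in\mathcal{W}}{L\subseteq W}$ in place of $Z$. Write $p=\frac{2^\ell-1}{2^n-1}$ for the probability that a fixed point lies in $L$, so that $\E[Z]=\card{\mathcal{X}}p$. We will compare $\E[fZ]$ with $\E[f]$.

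The lower bound is immediate. Since $\mu_x(f)\geq 1-2\delta\geq 3/4$ for every $x\in\mathcal{X}$, we get
\[
\E[fZ]=\sum_{x\in\mathcal{X}}\Prob{L}{x\in L}\,\mu_x(f)\geq \tfrac34\,\E[Z].
\]
For the upper bound, apply Cauchy--Schwarz together with Claim~\ref{claim:second-moment-bounds}:
\[
\E[fZ]\leq \sqrt{\E[f]}\sqrt{\E[Z^2]}\leq \sqrt{\E[f]}\sqrt{\E[Z]+\E[Z]^2}.
\]
Setting $m=\E[Z]$, these combine to give $\tfrac{9}{16}m^2\leq \E[f](m+m^2)$.

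It remains to solve this inequality for $m$. Because $\E[f]\leq 1/2<9/16$, the $m^2$ terms can be absorbed: $(9/16-\E[f])m^2\leq \E[f]m$. This gives $m\leq \E[f]/(1/16)=16\E[f]$, and hence $\card{\mathcal{X}}\leq 16\E[f]/p=O(\E[f]2^{n-\ell})$. This absorption step is the only point needing care, and it is exactly where the hypotheses $\E[f]\leq 1/2$ and $\delta\leq 1/8$ enter: the latter ensures $(1-2\delta)^2=9/16>1/2$, so the coefficient $9/16-\E[f]$ stays bounded away from zero.

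The hyperplane case is identical, with $q=\frac{2^{n-\ell}-1}{2^n-1}$ in place of $p$ and $\mu_W(f)$ in place of $\mu_x(f)$. This yields $\card{\mathcal{W}_{1-2\delta}(f)}\leq 16\E[f]/q=O(\E[f]2^\ell)$.
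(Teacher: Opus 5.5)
Your proposal is correct and follows the paper's proof essentially line for line: the same lower bound $\E[fZ]\geq(1-2\delta)\E[Z]$, the same Cauchy--Schwarz plus Claim~\ref{claim:second-moment-bounds} upper bound, and the same absorption of the $\E[Z]^2$ term giving $\E[Z]\leq 16\E[f]$; the paper handles the hyperplane case ``by duality'', which is your identical argument with $q$ in place of $p$. Two small nits: $\delta\leq 1/8$ gives $(1-2\delta)^2\geq 9/16$ (equality only at $\delta=1/8$), and dividing by $m$ needs $m>0$, though $m=0$ means $\mathcal{X}=\emptyset$ and the bound holds trivially.
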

\begin{proof}
We define $Z(L)=|\mathcal{X}\cap L|$, and note that
$\E[Z] = \frac{2^\ell-1}{2^n-1} |\mathcal X|$ and
\[ \E[fZ] =\sum_{x\in\mathcal{X}}\E[f(L)1_{x\in L}] =\mathbb{P}[x\in L]\sum_{x\in\mathcal{X}}\mu_x(f)\geq (1-2\delta)\mathbb{P}[x\in L]|\mathcal{X}| =(1-2\delta) \E[Z]. \]

On the other hand, we may apply Claim~\ref{claim:second-moment-bounds} to get
\[\E[fZ]^2 \leq  \E[f^2]\E[Z^2]\leq \E[f] (\E[Z]+\E[Z]^2),\]
where the first inequality follows from the Cauchy--Schwarz inequality and the second implicitly uses the Booleanity of $f$ so that $f^2 = f$. 

Combining the upper and lower bounds above gives $\bigl((1-2\delta)^2-\E[f]\bigr)\E[Z]\le \E[f]$. Since $\delta\le 1/8$ and $\E[f]\le 1/2$, it follows that $(1-2\delta)^2-\E[f]\ge\tfrac1{16}$, which shows that $|\mathcal{X}|\le 16\E[f]\frac{2^n-1}{2^\ell-1} = O(\E[f] 2^{n-\ell})$. The hyperplane statement follows from duality.
\end{proof}

\subsection{Proof of Theorem~\ref{thm:main}}
We assume without loss of generality that $\E[f]\leq 1/2$ (otherwise we work with $1-f$).
Fix $\delta>0$, let $c>0$ be a sufficiently small absolute constant, and take $\delta' = \xi = c \delta^{8}$. Pick $T_1$ and $\eps_1$ from Lemma~\ref{lem:most_are_close_to_1}, $T_2$, $\eps_2$ from Theorem~\ref{thm:lvl_1_ineq} for $\delta'$, $T_3$, $\eps_3$ from Lemma~\ref{lem:almost_constnat_absolute} for $\delta'$, $T_4$ from Claim~\ref{claim:h_global} for $\delta'$, then set $T = \max(T_1,T_2,T_3,T_4)$ and $\eps = \min(\eps_1,\eps_2,\eps_3,\delta')$.
We prove the statement for 
$\mathcal{X} = \mathcal{X}_{1-2\delta'}(f)$ 
and $\mathcal{W} = \mathcal{W}_{1-2\delta'}(f)$. We note that by Lemma~\ref{lem:almost_constnat_absolute} we already know that $\E[f]\leq \delta'$, so 
$\mathsf{var}(f)\leq \delta$.

Define 
\[
f_1(L) = f(L) 1_{E_{\delta'}}(L),
\qquad
f_2(L) = f(L) 1_{E_{1-2\delta'}}(L),
\qquad g(L) = \sum\limits_{x\in\mathcal{X}} 1_{x\in L}+\sum\limits_{W\in\mathcal{W}} 1_{L\subseteq W}.
\]
By Claim~\ref{claim:h_close_to_0} we get that
\begin{equation}\label{eq:3}
  \norm{f-f_1}_2^2 = \norm{h_{\delta'}}_2^2
  \leq O(\delta'^{1/8}\E[f]).
\end{equation}
Next, note that $1_{E_{1-2\delta'}}\leq 1_{E_{\delta'}}$ pointwise and by Lemma~\ref{lem:most_are_close_to_1}
\[
\Prob{L}{L\cap (\mathcal{X}_{\delta'}(f)
\setminus 
\mathcal{X}_{1-2\delta'}(f))\neq \emptyset}
\leq 
\card{\mathcal{X}_{\delta'}(f)
\setminus 
\mathcal{X}_{1-2\delta'}(f)}2^{\ell-n}\leq \xi \E[f],
\]
\[
\Prob{L}{\exists W\in\mathcal{W}_{\delta'}(f)
\setminus 
\mathcal{W}_{1-2\delta'}(f),  L\subseteq W}
\leq 
\card{\mathcal{W}_{\delta'}(f)
\setminus 
\mathcal{W}_{1-2\delta'}(f)}2^{-\ell}\leq \xi \E[f],
\]
implying that $\norm{1_{E_{\delta'}}-1_{E_{1-2\delta'}}}_1\leq 2\xi\E[f]$. Thus, we get that
\begin{equation}\label{eq:4}
  \norm{f_1-f_2}_2^2 
  \leq \norm{1_{E_{\delta'}}-1_{E_{1-2\delta'}}}_1
  \leq 2\xi\E[f].
\end{equation}

We define $ G(L)=1_{g(L)\ge 1}$ to be the Boolean-valued rounding of $g$. Now, we observe that $E_{1-2\delta'}(L)$ occurs if and only if $g(L)\ge 1$ and so $f_2 = fG$. So to estimate the norm of $g-f_2$, we use  the triangle inequality to get $\norm{g-f_2}_2^2
  \le 2\norm{g-G}_2^2+2\norm{G-f_2}_2^2$, and we next bound each of the terms in turn. 
  
  For the first term, we have $(g-G)^2=(g-1)^21_{g\ge 2}
  \le g(g-1)$. 
  As in Claim~\ref{claim:second-moment-bounds}, we define $p = \mathbb P[x \in L] = \frac{2^\ell-1}{2^n-1}$ and $q = \mathbb P[L \subseteq W] = \frac{2^{n-\ell}-1}{2^n-1}$, where $x$ is any point and $W$ is any hyperplane. As observed there, for distinct points $x,y$ and distinct hyperplanes $W,W'$ we have $\mathbb P[x,y \in L] \le p^2$ and $\mathbb P[L \subseteq W \cap W'] \le q^2$. If $x$ is a point and $W$ is a hyperplane then either $x \notin W$, in which case $\mathbb P[x \in L \subseteq W] = 0$, or $x \in W$, in which case
  \[
   \mathbb P[x \in L \subseteq W] = \frac{(2^\ell-1)(2^{n-\ell}-1)}{(2^n-1)(2^{n-1}-1)} = \frac{2^n-1}{2^{n-1}-1} pq \le 3pq.
  \]
  Therefore
\begin{align*}
  \E[g(g-1)]
  &=\sum_{\substack{x,y\in\mathcal{X}\\x\ne y}}\mathbb P[x,y\in L]
    +\sum_{\substack{W,W'\in\mathcal{W}\\W\ne W'}}\mathbb P[L\subseteq W\cap W']+2\sum_{\substack{x\in\mathcal{X}\\W\in\mathcal{W}}}\mathbb P[x\in L\subseteq W]\\
  &\le |\mathcal{X}|(|\mathcal{X}|-1)p^2
    +|\mathcal{W}|(|\mathcal{W}|-1)q^2
    +6|\mathcal{X}||\mathcal{W}|pq\\
  &\le 3\bigl(|\mathcal{X}|p+|\mathcal{W}|q\bigr)^2 \leq O(\E[f]^2),
\end{align*}
where the final inequality follows from Lemma~\ref{lem:trivial_upper_bound_on_sizes}. Consequently, we have $\norm{g-G}_2^2=O\bigl(\E[f]^2\bigr)$.

For the second term, since $f$ and $G$ are Boolean and $f_2=fG$,
\begin{align*}
  \norm{G-f_2}_2^2 =\E[(1-f)G] &\le \sum_{x\in\mathcal{X}}\E[(1-f)1_{x\in L}]
    +\sum_{W\in\mathcal{W}}\E[(1-f)1_{L\subseteq W}]\\
  &=p\sum_{x\in\mathcal{X}}(1-\mu_x(f))
    +q\sum_{W\in\mathcal{W}}(1-\mu_W(f))\\
  &\le 2\delta'\bigl(|\mathcal{X}|p+|\mathcal{W}|q\bigr)\leq O\bigl(\delta'\E[f]\bigr),
\end{align*}
where the last inequality again uses Lemma~\ref{lem:trivial_upper_bound_on_sizes}. Combining the above and using $\E[f]\leq \delta'$, we conclude
\begin{equation}\label{eq:5}
  \norm{g-f_2}_2^2 
  \leq O(\delta' \E[f]).
\end{equation}
  
Combining~\eqref{eq:3},~\eqref{eq:4},~\eqref{eq:5} we get that
\[
\norm{f-g}_2^2\leq 
O((\delta'^{1/8}+\xi+\delta')\E[f])\leq \delta\E[f],
\]
where the last transition holds provided that $c$ is small enough.
\section{Discussion and Open Problems}
Our main result, Theorem~\ref{thm:main}, gives an analog of the FKN theorem for the Grassmann scheme so long as the dimensions $\ell$ and $n-\ell$ are sufficiently large. This feature is crucially used in our key intermediate result, Lemma~\ref{lem:almost_constnat_absolute}, which morally speaking asserts that at a coarse scale, degree $1$ functions in this case must be close to constant. We believe that this phenomenon should occur for all constant degree functions on the Grassmann scheme, which will provide an analog of the Kindler--Safra theorem for the Grassmann scheme. 

The arguments given in this paper seem to generalize well to all constant degrees $d$, and the missing ingredient is the analog of the result of~\cite{FilmusIhringer} for degree $d$ Boolean functions over the Grassmann scheme. Recent works~\cite[Conjecture 7.2]{deg2Gras},~\cite{filmus2019boolean} proposed candidate conjectured structure for such functions, suggesting a structure analogous to constant depth decision trees. However, a quick exploration using ChatGPT 5.6 Pro yields the following counterexample. For $i=1,\ldots,n$ pick points $x(1),\ldots,x(n)$ and hyperplanes $W(1),\ldots,W(n)$ such that $x(i)\in W(j)$ if and only if $i=j$. A concrete choice can be taken to be
\[
x(i) = e_i\in\mathbb{F}_2^n,
\qquad
W(i) = \left\{x\in \mathbb{F}_2^n \;\middle|\; \sum\limits_{j\neq i}x_j=0\right\}.
\]
Then the function 
$f(L) = \sum\limits_{i=1}^{n} 1_{x(i)\in L\subseteq W(i)}$ is a degree $2$ Boolean function, however it is not of the structure asserted by either of the aforementioned conjectures. For the sake of the argument presented herein, it suffices to prove a more modest statement, namely a version of Lemma~\ref{lem:almost_constnat_absolute} where $g\in J_{\leq d}$ (instead of $g\in J_{\leq 1}$). We conjecture that such a result is true, though we do not currently know how to establish it.
\bibliography{ref}

@article{KMS2,
  author  = {Khot, Subhash and Minzer, Dor and Safra, Muli},
  title   = {Pseudorandom sets in {Grassmann} graphs have near-perfect expansion},
  journal = {Annals of Mathematics},
  volume  = {198},
  number  = {1},
  pages   = {1--92},
  year    = {2023},
  doi     = {10.4007/annals.2023.198.1.1}
}

@article{FilmusIhringer,
  author  = {Filmus, Yuval and Ihringer, Ferdinand},
  title   = {Boolean degree 1 functions on some classical association schemes},
  journal = {Journal of Combinatorial Theory, Series A},
  volume  = {162},
  pages   = {241--270},
  year    = {2019},
  doi     = {10.1016/j.jcta.2018.11.006}
}

@article{DinurFriedgutKindlerODonnell2007,
  author  = {Dinur, Irit and Friedgut, Ehud and Kindler, Guy and
             O'Donnell, Ryan},
  title   = {On the {Fourier} tails of bounded functions over the discrete cube},
  journal = {Israel Journal of Mathematics},
  volume  = {160},
  pages   = {389--412},
  year    = {2007},
  doi     = {10.1007/s11856-007-0068-9}
}

@unpublished{KS,
  author = {Kindler, Guy and Safra, Shmuel},
  title  = {Noise-resistant {Boolean} functions are juntas},
  year   = {2004},
  month  = mar,
  note   = {Unpublished manuscript},
  url    = {https://www.cs.tau.ac.il/~safra/PapersAndTalks/juntas.pdf}
}

@article{Bourgain2002FourierSpectrum,
  author  = {Bourgain, Jean},
  title   = {On the distribution of the {Fourier} spectrum of {Boolean} functions},
  journal = {Israel Journal of Mathematics},
  volume  = {131},
  pages   = {269--276},
  year    = {2002},
  doi     = {10.1007/BF02785861}
}

@article{Filmus2020FKNMultislice,
  author        = {Filmus, Yuval},
  title         = {{FKN} theorem for the multislice, with applications},
  journal       = {Combinatorics, Probability and Computing},
  volume        = {29},
  number        = {2},
  pages         = {200--212},
  year          = {2020},
  doi           = {10.1017/S0963548319000361},
  eprint        = {1809.03089},
  archivePrefix = {arXiv},
  primaryClass  = {math.CO}
}

@article{EldanKindlerLifshitzMinzer2025,
  author        = {Eldan, Ronen and Kindler, Guy and Lifshitz, Noam and
                   Minzer, Dor},
  title         = {Isoperimetric inequalities made simpler},
  journal       = {Discrete Analysis},
  volume        = {7},
  pages         = {1--23},
  year          = {2025},
  doi           = {10.19086/da.142095},
  eprint        = {2204.06686},
  archivePrefix = {arXiv},
  primaryClass  = {math.CO},
  url           = {https://doi.org/10.19086/da.142095}
}

@article{AlonDinurFriedgutSudakov2004,
  author  = {Alon, Noga and Dinur, Irit and Friedgut, Ehud and
             Sudakov, Benny},
  title   = {Graph products, {Fourier} analysis and spectral techniques},
  journal = {Geometric and Functional Analysis},
  volume  = {14},
  number  = {5},
  pages   = {913--940},
  year    = {2004},
  doi     = {10.1007/s00039-004-0475-2}
}

@article{JendrejOleszkiewiczWojtaszczyk2012,
  author  = {Jendrej, Jacek and Oleszkiewicz, Krzysztof and
             Wojtaszczyk, Jakub O.},
  title   = {On some extensions of the {Friedgut--Kalai--Naor} theorem},
  journal = {Theory of Computing},
  volume  = {8},
  number  = {1},
  pages   = {261--281},
  year    = {2012},
  doi     = {10.4086/toc.2012.v008a012}
}

@misc{RubinsteinSafra2015,
      title={Boolean functions whose {F}ourier transform is concentrated on pair
wise disjoint subsets of the input}, 
      author={Aviad Rubinstein and Muli Safra},
      year={2015},
      eprint={1512.09045},
      archivePrefix={arXiv},
      primaryClass={math.PR},
      url={https://arxiv.org/abs/1512.09045}
}

@article{Ihringer2023Classification,
    AUTHOR = {Ihringer, Ferdinand},
     TITLE = {The classification of {B}oolean degree 1 functions in
              high-dimensional finite vector spaces},
   JOURNAL = {Proc. Amer. Math. Soc.},
  FJOURNAL = {Proceedings of the American Mathematical Society},
    VOLUME = {152},
      YEAR = {2024},
    NUMBER = {12},
     PAGES = {5355--5365},
      ISSN = {0002-9939,1088-6826},
   MRCLASS = {51E20 (05E30 06E30)},
  MRNUMBER = {4855888},
MRREVIEWER = {Valentino\ Smaldore},
       DOI = {10.1090/proc/16957},
       URL = {https://doi.org/10.1090/proc/16957},
}

@misc{Filmus2026Grassmann,
  author        = {Filmus, Yuval},
  title         = {Boolean degree one functions on the {Grassmann} scheme},
  year          = {2026},
  eprint        = {2606.23465},
  archivePrefix = {arXiv},
  primaryClass  = {math.CO}
}

@article{KMS1,
  author  = {Khot, Subhash and Minzer, Dor and Safra, Muli},
  title   = {On independent sets, 2-to-2 games, and {Grassmann} graphs},
  journal = {Theory of Computing},
  volume  = {21},
  number  = {10},
  pages   = {1--55},
  year    = {2025},
  doi     = {10.4086/toc.2025.v021a010}
}

@article{DKKMS1,
  author  = {Dinur, Irit and Khot, Subhash and Kindler, Guy and
             Minzer, Dor and Safra, Muli},
  title   = {Towards a proof of the 2-to-1 games conjecture},
  journal = {Theory of Computing},
  volume  = {21},
  number  = {11},
  pages   = {1--50},
  year    = {2025},
  doi     = {10.4086/toc.2025.v021a011}
}

@article{DKKMS2,
  author  = {Dinur, Irit and Khot, Subhash and Kindler, Guy and
             Minzer, Dor and Safra, Muli},
  title   = {On non-optimally expanding sets in {Grassmann} graphs},
  journal = {Israel Journal of Mathematics},
  volume  = {243},
  number  = {1},
  pages   = {377--420},
  year    = {2021},
  doi     = {10.1007/s11856-021-2164-7}
}

@article{barak2015making,
  author  = {Barak, Boaz and Gopalan, Parikshit and H{\aa}stad, Johan and
             Meka, Raghu and Raghavendra, Prasad and Steurer, David},
  title   = {Making the long code shorter},
  journal = {SIAM Journal on Computing},
  volume  = {44},
  number  = {5},
  pages   = {1287--1324},
  year    = {2015}
}

@article{khot2017hardness,
  author  = {Khot, Subhash and Saket, Rishi},
  title   = {Hardness of coloring 2-colorable 12-uniform hypergraphs with
             {$2^{(\log n)^{\Omega(1)}}$} colors},
  journal = {SIAM Journal on Computing},
  volume  = {46},
  number  = {1},
  pages   = {235--271},
  year    = {2017}
}

@article{KaufmanM,
  author  = {Kaufman, Tali and Minzer, Dor},
  title   = {Improved optimal testing results from global hypercontractivity},
  journal = {SIAM Journal on Computing},
  volume  = {54},
  number  = {3},
  pages   = {625--663},
  year    = {2025}
}

@inproceedings{MZheng,
  author    = {Minzer, Dor and Zheng, Kai Zhe},
  title     = {Optimal testing of generalized {Reed--Muller} codes in fewer
               queries},
  booktitle = {Proceedings of the 64th Annual IEEE Symposium on Foundations
               of Computer Science (FOCS)},
  pages     = {206--233},
  year      = {2023},
  publisher = {IEEE}
}

@book{BCN1989,
  author    = {Brouwer, Andries E. and Cohen, Arjeh M. and Neumaier, Arnold},
  title     = {Distance-Regular Graphs},
  series    = {Ergebnisse der Mathematik und ihrer Grenzgebiete. 3. Folge},
  volume    = {18},
  publisher = {Springer-Verlag},
  address   = {Berlin},
  year      = {1989},
  doi       = {10.1007/978-3-642-74341-2},
  isbn      = {978-3-540-50619-5}
}

@book{GodsilMeagher2016,
  author    = {Godsil, Chris and Meagher, Karen},
  title     = {Erd{\H{o}}s--Ko--Rado Theorems: Algebraic Approaches},
  series    = {Cambridge Studies in Advanced Mathematics},
  volume    = {149},
  publisher = {Cambridge University Press},
  address   = {Cambridge},
  year      = {2016},
  doi       = {10.1017/CBO9781316414958},
  isbn      = {978-1-107-12844-6}
}

@article{friedgut2002boolean,
  author  = {Friedgut, Ehud and Kalai, Gil and Naor, Assaf},
  title   = {Boolean functions whose {Fourier} transform is concentrated on
             the first two levels},
  journal = {Advances in Applied Mathematics},
  volume  = {29},
  number  = {3},
  pages   = {427--437},
  year    = {2002}
}

@article{deg2Gras,
  author   = {De Beule, Jan and D'haeseleer, Jozefien and Ihringer, Ferdinand and
              Mannaert, Jonathan},
  title    = {Degree 2 {Boolean} functions on {Grassmann} graphs},
  journal  = {Electronic Journal of Combinatorics},
  volume   = {30},
  number   = {1},
  pages    = {1--23},
  year     = {2023},
  doi      = {10.37236/11040},
  issn     = {1097-1440}
}

@article{filmus2019boolean,
  author  = {Filmus, Yuval and Ihringer, Ferdinand},
  title   = {Boolean constant degree functions on the slice are juntas},
  journal = {Discrete Mathematics},
  volume  = {342},
  number  = {12},
  pages   = {111614},
  year    = {2019}
}

@article{Filmus2021BooleanSn,
  author        = {Filmus, Yuval},
  title         = {Boolean functions on {$S_n$} which are nearly linear},
  journal       = {Discrete Analysis},
  volume        = {25},
  pages         = {1--27},
  year          = {2021},
  doi           = {10.19086/da.30186},
  eprint        = {2107.07833},
  archivePrefix = {arXiv},
  primaryClass  = {math.CO}
}

@article {EFF1,
    AUTHOR = {Ellis, David and Filmus, Yuval and Friedgut, Ehud},
     TITLE = {A quasi-stability result for dictatorships in {$S_n$}},
   JOURNAL = {Combinatorica},
  FJOURNAL = {Combinatorica. An International Journal on Combinatorics and
              the Theory of Computing},
    VOLUME = {35},
      YEAR = {2015},
    NUMBER = {5},
     PAGES = {573--618},
      ISSN = {0209-9683},
   MRCLASS = {05D99 (20B30 42B10)},
  MRNUMBER = {3437896},
MRREVIEWER = {Norihide Tokushige},
       DOI = {10.1007/s00493-014-3027-1},
}

@article {EFF2,
    AUTHOR = {Ellis, David and Filmus, Yuval and Friedgut, Ehud},
     TITLE = {A stability result for balanced dictatorships in {${\rm
              S_n}$}},
   JOURNAL = {Random Structures Algorithms},
  FJOURNAL = {Random Structures \& Algorithms},
    VOLUME = {46},
      YEAR = {2015},
    NUMBER = {3},
     PAGES = {494--530},
      ISSN = {1042-9832},
   MRCLASS = {94D05 (06E30)},
  MRNUMBER = {3324758},
MRREVIEWER = {Norihide Tokushige},
       DOI = {10.1002/rsa.20515},
}

@article {EFF3,
    AUTHOR = {Ellis, David and Filmus, Yuval and Friedgut, Ehud},
     TITLE = {Low-degree {B}oolean functions on {$S_n$}, with an application
              to isoperimetry},
   JOURNAL = {Forum Math. Sigma},
  FJOURNAL = {Forum of Mathematics. Sigma},
    VOLUME = {5},
      YEAR = {2017},
     PAGES = {Paper No. e23, 46},
   MRCLASS = {05D05 (05C25 06E30 94D05)},
  MRNUMBER = {3708207},
MRREVIEWER = {Mikl\'{o}s B\'{o}na},
       DOI = {10.1017/fms.2017.24},
}

@inproceedings{Ellis2023analogue,
author = {Ellis, David and Kindler, Guy and Lifshitz, Noam},
title = {An Analogue of Bonami’s Lemma for Functions on Spaces of Linear Maps, and 2-2 Games},
year = {2023},
isbn = {9781450399135},
publisher = {Association for Computing Machinery},
address = {New York, NY, USA},
url = {https://doi.org/10.1145/3564246.3585116},
doi = {10.1145/3564246.3585116},
booktitle = {Proceedings of the 55th Annual ACM Symposium on Theory of Computing},
pages = {656–660},
numpages = {5},
location = {Orlando, FL, USA},
series = {STOC 2023}
}
\bibliographystyle{alphaurl}
\appendix
\section{Missing Proofs}\label{sec:apx}
In this section we prove Theorem~\ref{thm:lvl_1_ineq} by reducing it to the analogous result over the bilinear scheme.
The bilinear scheme $H_{n,\ell}$ has the vertex set $\mathbb{F}_2^{n\times \ell}$, and its edges are $(A,B)$ where $B-A = u\otimes v$ for non-zero $u\in\mathbb{F}_2^n$ and $v\in\mathbb{F}_2^{\ell}$. Thus, one can naturally lift a function $f\colon {\sqbinom{\mathbb{F}_2^n}{\ell}}\to\{0,1\}$ to a function $f^{\star}$ on $H_{n,\ell}$ by defining
\[
f^{\star}(M) = f(\mathsf{span}(\mathsf{col}_1(M),\ldots,\mathsf{col}_{\ell}(M))
\]
if $M$ has rank $\ell$, and $0$ otherwise. We note that $f^{\star}$ is basis invariant, in the sense that $f^{\star}(MB) = f^{\star}(M)$ for all invertible $B\in\mathbb{F}_2^{\ell\times\ell}$.
Define $c = \prod\limits_{i=0}^{\ell-1}(1-2^{i-n})$, and note that $c$ is exactly the probability that the rank of $M\in\mathbb{F}_2^{n\times \ell}$  has rank $\ell$. We will use the fact that $c\geq \Omega(1)$, which is clear. The bilinear scheme is a Cayley graph, and as such its eigenvectors are characters. It is shown in~\cite{KMS2} that the level decomposition is given as
\[
P_{=i} (f^{\star})(M)
=\sum\limits_{\alpha\in\mathbb{F}_2^{n\times \ell}, \mathsf{rk}(\alpha)=i}\widehat{f^{\star}}(\alpha)\chi_{\alpha}(M),
\]
where $\chi_{\alpha}(M) = (-1)^{\sum_{i,j}\alpha_{i,j}M_{i,j}}$, $\widehat{f^{\star}}(\alpha) = \Expect{M}{f^{\star}(M)\chi_{\alpha}(M)}$.

We will use the definition of pseudorandomness of Boolean functions over $H_{n,\ell}$, as per~\cite[Definition 2.5]{KMS2}. Below, we specialize it to our case.
\begin{definition}
    For a basis invariant function $f^{\star}\colon \mathbb{F}_2^{n\times \ell}\to\{0,1\}$, we say that it is $(1,\eta)$-pseudorandom if for all points $x\in\mathbb{F}_2^n$ and hyperplanes 
    $W\subseteq \mathbb{F}_2^n$
    \[
    \Prob{x_2,\ldots,x_{\ell}}{f^{\star}(x,x_2,\ldots,x_{\ell})=1}\leq \eta,
    \qquad
    \Prob{x_1,x_2,\ldots,x_{\ell}\in W}{f^{\star}(x_1,x_2,\ldots,x_{\ell})=1}\leq \eta.
    \]
\end{definition}
We have the following statement, which is~\cite[Lemma 2.8]{KMS2}
\begin{fact}\label{fact:global}
    If $f\colon {\sqbinom{\mathbb{F}_2^n}{\ell}}\to\{0,1\}$ is $(1,\eta)$-global, then 
    $f^{\star}$ is 
    $(1,\eta+2^{\ell+1-n})$-pseudorandom.
\end{fact}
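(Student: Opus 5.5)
The plan is to unwind both conditions in the definition of $(1,\eta)$-pseudorandomness. In each case the random tuple of columns, conditioned on having full rank, spans a \emph{uniformly random} $\ell$-dimensional subspace in the relevant chunk: subspaces containing $x$ in the first case, and subspaces contained in $W$ in the second. The probability in question is then at most the corresponding conditional expectation $\mu_x(f)$ or $\mu_W(f)$, which is at most $\eta$ by globality. The additive term $2^{\ell+1-n}$ is slack for the rank-deficient event. It is needed if one uses the version of pseudorandomness in~\cite[Definition 2.5]{KMS2}, where the probabilities are normalized or conditioned slightly differently.

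\textbf{First condition.} Fix a point $x$. If $x=0$, the matrix $(x,x_2,\ldots,x_\ell)$ is never of rank $\ell$, so $f^\star=0$ and there is nothing to prove. For $x\neq 0$, let $\mathcal{E}$ be the event that $x,x_2,\ldots,x_\ell$ are linearly independent. Off $\mathcal{E}$ we have $f^\star=0$ by definition.

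Conditioned on $\mathcal{E}$, I will show that $\mathsf{span}(x,x_2,\ldots,x_\ell)$ is uniform among $\ell$-spaces containing $x$. The stabilizer of $x$ in $\mathrm{GL}_n(\mathbb{F}_2)$ acts transitively on such spaces. It also preserves the distribution of $(x_2,\ldots,x_\ell)$ given $\mathcal{E}$, since that distribution is uniform on independent completions of $x$. Alternatively, one can count directly: every $\ell$-space $L\ni x$ arises from the same number of ordered completions, namely $\prod_{i=1}^{\ell-1}(2^\ell-2^i)$.

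It follows that
\[
\Pr[f^\star(x,x_2,\ldots,x_\ell)=1]=\Pr[\mathcal{E}]\,\mu_x(f)\leq\eta .
\]

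\textbf{Second condition.} This is identical, with $W$ playing the role of the ambient space. Fix a hyperplane $W$ and sample $x_1,\ldots,x_\ell$ uniformly from $W$. On the event that they are independent, their span is a uniform element of $\sqbinom{W}{\ell}$, by the same counting: every $\ell$-subspace has the same number of ordered bases. Off this event $f^\star=0$. Hence the probability is at most $\mu_W(f)\leq\eta$.

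\textbf{Accounting for the error term.} If the pseudorandomness condition is instead formulated conditionally on full rank, or in the unnormalized form of~\cite{KMS2}, the only discrepancy is the rank-deficient event. I bound it by a union bound:
\[
\Pr[x_1,\ldots,x_\ell\in W\text{ dependent}]\leq\sum_{i=0}^{\ell-1}2^{i-(n-1)}\leq 2^{\ell+1-n},
\]
and similarly $\sum_{i=1}^{\ell-1}2^{i-n}\leq 2^{\ell-n}$ in the point case. This gives the stated bound $\eta+2^{\ell+1-n}$.

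The only step needing care is the uniformity claim: that the span of a random independent tuple, with a prescribed first vector or prescribed ambient hyperplane, is uniformly distributed on the corresponding family of subspaces. This follows by the orbit/counting argument above. Everything else is bookkeeping.
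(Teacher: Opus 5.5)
Your proof is correct, but it takes a different route from the paper, which gives no argument and simply imports the statement as \cite[Lemma 2.8]{KMS2}.

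You verify the two conditions of the specialized definition directly. For $x\neq 0$, every $\ell$-space $L\ni x$ has exactly $\prod_{i=1}^{\ell-1}(2^\ell-2^i)$ ordered completions $(x_2,\ldots,x_\ell)$. Likewise, every $\ell$-subspace of $W$ has the same number of ordered bases. So the two probabilities equal $\Pr[\text{full rank}]\cdot\mu_x(f)$ and $\Pr[\text{full rank}]\cdot\mu_W(f)$, and both are at most $\eta$. The case $x=0$ is trivial, as you note.

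Your direct count therefore proves the sharper bound $\eta$. The stated bound $\eta+2^{\ell+1-n}$ follows because the condition is monotone in the parameter. The additive term is pure slack here. Presumably it is inherited from the more general formulation in \cite{KMS2}, and the paper keeps it only because it cites that lemma verbatim.

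Your final ``accounting'' paragraph is unnecessary, and its logic is muddled. If the condition were formulated conditionally on full rank, the probabilities would be exactly $\mu_x(f)$ and $\mu_W(f)$, so no rank-deficiency discrepancy appears there either. You can drop that paragraph entirely.

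The citation gives the paper access to the general zoom-in/zoom-out framework of \cite{KMS2} without re-deriving anything. Your route gives a short, self-contained verification with a better bound.
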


We will also use the following result, which is~\cite[Theorem 2.13]{KMS2}.
\begin{lemma}\label{lem:lvl_1_bilinear}
    Suppose that a basis invariant $f^{\star}\colon \mathbb{F}_2^{n\times \ell}\to\{0,1\}$ is $(1,\eta)$-pseudorandom. Then
    \[
    \norm{P_{=1}(f^{\star})}_{2}^2\leq 2^{10}\eta^{1/4} \E[f^{\star}].
    \]
\end{lemma}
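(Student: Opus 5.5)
The plan is to prove the level-one inequality directly on $H_{n,\ell}$ with a global (pseudorandom) hypercontractive estimate for the degree-one part, followed by a Hölder argument. This is a reconstruction of the strategy of~\cite{KMS2} (compare~\cite{Ellis2023analogue}), not a checked proof. Write $\mu=\E[f^\star]$ and $g=P_{=1}f^\star=\sum_{\mathsf{rk}(\alpha)=1}\widehat{f^\star}(\alpha)\chi_\alpha$. Over $\mathbb{F}_2$, every rank-one $\alpha$ is uniquely $u\otimes v$ with $u\in\mathbb{F}_2^n\setminus\{0\}$ and $v\in\mathbb{F}_2^\ell\setminus\{0\}$. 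Grouping characters by $u$ gives
\[
\sum_{v\neq 0}\widehat{f^\star}(u\otimes v)^2=\mathsf{var}_y\bigl(F_u(y)\bigr),\qquad F_u(y)=\Pr[f^\star(M)=1\mid M^{T}u=y].
\]
By basis invariance, $F_u(y)$ depends only on whether $y=0$. Moreover, $F_u(0)$ is the density of $f^\star$ on matrices whose columns lie in the hyperplane $u^\perp$, so $F_u(0)\leq\eta$. Grouping instead by $v$ gives the dual statement: the contribution of a fixed $v$ is governed by restrictions that fix a column combination $Mv=x$. These are controlled by the first (point) condition of pseudorandomness.

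I do not expect to get the bound from these second-moment identities alone. Summing $2^{-\ell}(F_u(0)-\mu)^2$ over all $u$ only gives $O(\eta\mu 2^{n-\ell})$, which is useless. So I would use the identities only to bound the $L_2$ norms of \emph{restrictions and derivatives} of $g$ along a point or a hyperplane. Concretely, I would show that every such restriction of $g$ (as in the generalized influences of~\cite{KMS2}) has squared $L_2$ norm at most $O(\eta)\cdot\norm{g}_2$-type quantities, or simply $O(\eta)$ after normalization. The next step is a global Bonami lemma for degree-one functions on $H_{n,\ell}$, in the form
\[
\norm{g}_4^4\leq C\norm{g}_2^4+C\,\eta\,\norm{g}_2^2 .
\]
I would prove it by expanding $g^4$ in characters. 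Quadruples $\alpha_1+\cdots+\alpha_4=0$ of rank-one matrices are either generic, which gives the $\norm{g}_2^4$ term, or share a common row or column vector $u$ or $v$; the latter are bounded by the restriction norms above.

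Finally, I would use Hölder with $f^\star$ Boolean:
\[
\norm{g}_2^2=\inner{f^\star}{g}\leq\norm{f^\star}_{4/3}\norm{g}_4=\mu^{3/4}\norm{g}_4 ,
\]
and plug in the Bonami bound. A short case analysis follows. If $\norm{g}_2^4$ dominates, we get $\norm{g}_2^2\leq O(\mu^{3/2})$, which is at most $O(\eta^{1/4}\mu)$ once one notes that $\mu\leq\eta$-type bounds hold. This holds in the regime where $f^\star$ is $(1,\eta)$-pseudorandom, since averaging the point condition over $x$ gives $\mu\lesssim\eta$. Otherwise $\norm{g}_2^2\leq\mu^{3/4}(C\eta)^{1/4}\norm{g}_2^{1/2}$, which gives $\norm{g}_2^{3/2}\leq C\eta^{1/4}\mu^{3/4}$, i.e.\ $\norm{g}_2^2\leq C\eta^{1/3}\mu$. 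That is at least as good as the claimed $\eta^{1/4}$ exponent. Tracking constants should give the stated $2^{10}$.

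The main obstacle is the global Bonami step. One must classify the degenerate quadruples of rank-one matrices summing to zero, for instance $u\otimes v_1+u\otimes v_2+u'\otimes v_3+u'\otimes v_4$ patterns. One must then show that each degenerate class is bounded by a sum over points or hyperplanes of squared norms of the corresponding restriction of $g$, and in turn by $\eta$ via pseudorandomness. I would first try to reduce this to the Ellis--Kindler--Lifshitz derivative formulation~\cite{Ellis2023analogue}, applied at degree one only, where the combinatorics is minimal.
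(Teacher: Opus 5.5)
Note first that the paper gives no argument for this lemma: it imports the statement verbatim as \cite[Theorem 2.13]{KMS2}. Your frame is the right one: a fourth-moment bound for $g=P_{=1}f^{\star}$, followed by $\norm{g}_2^2=\inner{f^{\star}}{g}\le\mu^{3/4}\norm{g}_4$. Your H\"older case analysis and the bound $\mu\le\eta$ are also correct.

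\textbf{The gap.} The proposal stops exactly at the load-bearing step. The global Bonami inequality is only asserted, and the route you sketch for it is inaccurate in two places.

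First, the classification of zero-sum quadruples is off. For $u\neq u'$, your example $u\otimes v_1+u\otimes v_2+u'\otimes v_3+u'\otimes v_4$ vanishes only when $v_1=v_2$ and $v_3=v_4$, i.e.\ it is a pairing. The genuinely non-paired classes are three: all four share $u$; all four share $v$; and the mixed class $\{u\otimes v_1,\,u\otimes v_2,\,u'\otimes w,\,(u+u')\otimes w\}$ with $w=v_1+v_2$. Re-paired, the mixed class is exactly the two rank-one decompositions of a rank-two matrix. It is not small in $\eta$; it is only $O(\norm{g}_2^4)$ by Cauchy--Schwarz. So ``degenerate implies controlled by $\eta$'' is false, although your target inequality survives.

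Second, pseudorandomness does not control $L_2$ norms of restrictions of $g$; those need not be small. What it controls, in absolute value, are the restricted means $\E[g\mid M^{T}u=0]=F_u(0)-\mu$ and $\E[g\mid Mv=x]$.

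\textbf{How to close it.} The missing observation makes the Bonami step short. Let $b_u$ denote the value of $F_u$ off $0$. Your remark on $F_u$ already implies, via basis invariance, that $\widehat{f^{\star}}(u\otimes v)=a_u:=2^{-\ell}(F_u(0)-b_u)$ for every $v\neq 0$. Hence
\[
g(M)=\sum_{u\neq 0}a_u\psi_u(M),\qquad \psi_u(M)=2^{\ell}1_{M^{T}u=0}-1,\qquad \norm{g}_2^2=(2^{\ell}-1)\sum_u a_u^2 .
\]
Moreover, $\E[\psi_{u_1}\psi_{u_2}\psi_{u_3}\psi_{u_4}]=0$ unless each $u_i$ lies in the span of the others. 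Otherwise $M^{T}u_i$ is independent of the rest, and $\E[\psi_{u_i}]=0$. Only four multiset types survive:
\begin{enumerate}
\item $\{u,u,u,u\}$: the value is at most $2^{3\ell}$. The hyperplane condition and $b_u\le 2\mu\le 2\eta$ give $|a_u|\le 2^{1-\ell}\eta$, so the contribution is at most $8\eta^2\norm{g}_2^2$.
\item Pairings $\{u,u,w,w\}$: these give at most $3\norm{g}_2^4$.
\item $\{u,u,w,u+w\}$: the value is $(2^{\ell}-1)(2^{\ell}-2)$. Using $\bigl|\sum_w a_wa_{u+w}\bigr|\le\sum_w a_w^2$, the contribution is at most $6\norm{g}_2^4$.
\item Four distinct vectors with $u_1+u_2+u_3+u_4=0$: the value is $2^{\ell}-1$. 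Write $\sum_u a_u(-1)^{\langle u,x\rangle}=G(x)-\mu$, where $G(x)=\Pr[f^{\star}(M)=1\mid \mathsf{col}_1(M)=x]\in[0,\eta]$ by the point condition. The sum over such quadruples is at most $\E_x[(G(x)-\mu)^4]+2\sum_u a_u^4\le\eta^2\sum_u a_u^2+2\sum_u a_u^4$. This gives $O(\eta^2\norm{g}_2^2)$.
\end{enumerate}
Altogether $\norm{g}_4^4\le 9\norm{g}_2^4+O(\eta^2)\norm{g}_2^2$. Your H\"older case analysis then yields $\norm{g}_2^2=O(\eta^{1/2}\mu)$, which is stronger than the stated $2^{10}\eta^{1/4}\E[f^{\star}]$.
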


\subsection{Proof of 
Theorem~\ref{thm:lvl_1_ineq}}\label{sec:apx_1}
   Fix $f$ as in the theorem and let $f^{\star}$ be its lifting to $H_{n,\ell}$. We prove that
   \begin{equation}\label{eq:8}
   \norm{P_{=1}f}_2^2
   \leq
   O(\norm{P_{=1}(f^{\star})}^2+2^{\ell-n}\E[f]^2).
   \end{equation}
   Plugging this into $\norm{P_{\leq 1}f}_2^2=\E[f]^2+\norm{P_{=1}f}_2^2$, using Lemma~\ref{lem:lvl_1_bilinear} and $\E[f]\leq \eta$, the result follows. 
   
   Towards establishing~\eqref{eq:8}, define the operator $\mathrm{S}\colon L_2({\sqbinom{\mathbb{F}_2^n}{\ell}})
   \to 
   L_2(\sqbinom{\mathbb{F}_2^n}{n-1})$ by
   \[
   \mathrm{S} h(W) =
   \Expect{L\in {\sqbinom{W}{\ell}}}{h(L)}.
   \]
   We first note all components of degree more than $1$ vanish under $\mathrm{S}$. Indeed, defining $v_W(L) = 1_{L\subseteq W}$, for any $i>1$ we have that 
   \[
   \mathrm{S} P_{=i} h(W)
   =
   \frac{\inner{P_{=i} h}{v_W}}{\Prob{L}{L\subseteq W}}
   =0,
   \]
   where in the last transition we used the fact that $v_W\in J_{\leq 1}$. It follows that 
   \begin{equation}\label{eq:6}
   \mathrm{S} h = \E[h] + \mathrm{S} P_{=1}h.
   \end{equation}
   
   Consider a character $\alpha = u\otimes v$ for $u\in\mathbb{F}_2^n$
   and $v\in\mathbb{F}_2^{\ell}$ non-zero. Then
   \[
   \widehat{f^{\star}}(u\otimes v)
   =
   c
   \Expect{L\in {\sqbinom{\mathbb{F}_2^n}{\ell}}}{f(L)\Expect{M:\mathsf{colspan}(M)=L}{\chi_{u\otimes v}(M)}}.
   \]
   By symmetry, the inner expectation is equal to
   $\Expect{M:\mathsf{colspan}(M)=L}{(-1)^{\inner{u}{\mathsf{col}_1(M)}}}$. As $\mathsf{col}_1(M)$ is a random non-zero vector from $L$, we get that the expectation is $1$ if $L\subseteq \mathsf{span}(u)^{\perp}$, and otherwise is $-\frac{1}{2^{\ell}-1}$. Thus,
   \begin{align*}
   \widehat{f^{\star}}(u\otimes v)
   &= c
   \Expect{L\in {\sqbinom{\mathbb{F}_2^n}{\ell}}}{f(L)
   \left(\frac{2^{\ell}}{2^{\ell}-1}1_{L\subseteq \mathsf{span}(u)^{\perp}}-\frac{1}{2^{\ell}-1}\right)}\\
   &=c\left(
   \frac{2^{\ell}}{2^{\ell}-1}
   \frac{2^{n-\ell}-1}{2^{n}-1}\mathrm{S}f(\mathsf{span}(u)^\perp)-\frac{1}{2^{\ell}-1}\E[f]\right).
   \end{align*}
  Using~\eqref{eq:6} we get
  \begin{align}\label{eq:9}
   \widehat{f^{\star}}(u\otimes v)
   &=
   c\left(
   \frac{2^{\ell}}{2^{\ell}-1}
   \frac{2^{n-\ell}-1}{2^{n}-1}\mathrm{S}P_{=1}f(\mathsf{span}(u)^\perp)+
   \frac{1}{2^{\ell}-1}
   \left(
   \frac{2^{\ell}(2^{n-\ell}-1)}{2^{n}-1}-1\right)\E[f]\right)\notag\\
   &=c\left(
   \frac{2^{\ell}}{2^{\ell}-1}
   \frac{2^{n-\ell}-1}{2^{n}-1}\mathrm{S}P_{=1}f(\mathsf{span}(u)^\perp)-
   \frac{1}{2^{n}-1}\E[f]\right)
  \end{align}
  Re-arranging, squaring and taking sum over $u,v$ gives
  \begin{align}\label{eq:7}
  (2^n-1)\norm{\mathrm{S} P_{=1} f}_2^2=\sum\limits_{u\in\mathbb{F}_2^n\setminus\{0\}}
  \mathrm{S}P_{=1}f(\mathsf{span}(u)^\perp)^2
  &\leq 
  \frac{2}{c}2^{2\ell}\sum\limits_{u\in \mathbb{F}_2^n\setminus\{0\}}2\widehat{f^{\star}}(u\otimes v)^2
  +3\left(\frac{2^{\ell}}{2^{n}-1}\E[f]\right)^2\notag\\
  &\leq \frac{2}{c}2^{\ell}\norm{P_{=1} f^{\star}}_2^2
  +\frac{4}{c}2^{2\ell-n}\E[f]^2,
  \end{align}
  in the last equality we used the fact $f^{\star}$ implies that \[
  \sum\limits_{u\in\mathbb{F}_2^n\setminus\{0\}}2\widehat{f^{\star}}(u\otimes v)^2=\frac{1}{2^{\ell}-1}\sum\limits_{u\in\mathbb{F}_2^n\setminus\{0\},v'\in\mathbb{F}_2^{\ell}\setminus\{0\}}2\widehat{f^{\star}}(u\otimes v')^2,
  \]
  which follows by the fact that $f^{\star}$ is basis invariant, and then bounded the last sum by $\norm{P_{=1}f^{\star}}_2^2$ via Parseval's equality. It remains to relate 
  $\norm{\mathrm{S} P_{=1} f}_2^2$ to 
  $\norm{P_{=1} f}_2^2$. As $P_{=1} f$ is degree $1$ and is orthogonal to the all $1$ function, setting $v_x(L) = 1_{x\in L}$ we may write
  \[
    P_{=1}f(L) = \sum\limits_{x\in\mathbb{F}_2^n\setminus\{0\}}\alpha_x v_x(L),
  \]
  where $\sum\limits_{x}\alpha_x=0$. Note that $\mathrm{S} v_x(W) = \frac{2^{\ell}-1}{2^{n-1}-1}1_{x\in W}$, so
  $\mathrm{S} P_{=1}f(W) = \sum\limits_{x} \alpha_x\frac{2^{\ell}-1}{2^{n-1}-1}1_{x\in W}$ and thus
  \[
  \norm{\mathrm{S} P_{=1}f}_2^2
  =\sum\limits_{x}\alpha_x^2\left(\frac{2^{\ell}-1}{2^{n-1}-1}\right)^2\frac{2^{n-1}-1}{2^{n}-1}
  +\sum\limits_{x\neq y}
  \alpha_x\alpha_y
  \left(\frac{2^{\ell}-1}{2^{n-1}-1}\right)^2\frac{(2^{n-1}-1)(2^{n-1}-2)}{(2^{n}-1)(2^n-2)}.
  \]
  Completing the second sum with $x=y$ and using $\sum_{x}\alpha_x=0$, we get
  \[
  \norm{\mathrm{S} P_{=1}f}_2^2
  =\sum\limits_{x}\left(\frac{2^{\ell}-1}{2^{n-1}-1}\right)^2\alpha_x^2
  \left(\frac{2^{n-1}-1}{2^{n}-1}
  -
  \frac{(2^{n-1}-1)(2^{n-1}-2)}{(2^{n}-1)(2^n-2)}\right)
  =\Theta(2^{2(\ell-n)})
  \sum\limits_{x}\alpha_x^2.
  \]
  Similar calculations give
  \[
  \norm{P_{=1}f}_2^2
  =\sum\limits_{x}\alpha_x^2\frac{2^{\ell}-1}{2^n-1}
  +\sum\limits_{x\neq y}\alpha_x\alpha_y
  \frac{(2^{\ell}-1)(2^{\ell}-2)}{(2^n-1)(2^n-2)}
  =\sum\limits_{x}\alpha_x^2\frac{2^{\ell}-1}{2^n-1}
  \left(1-\frac{2^{\ell}-2}{2^{n}-2}\right),
  \]
  so $\norm{P_{=1}f}_2^2=\Theta(2^{\ell-n})\sum\limits_{x}\alpha_x^2$, and overall we get that 
  $\norm{\mathrm{S}P_{=1}f}_2^2=\Theta(2^{\ell-n})\norm{P_{=1}f}_2^2$. Plugging this into~\eqref{eq:7} gives
  \[
  \Theta(2^{\ell})\norm{P_{=1}f}_2^2\leq \frac{2}{c}2^{\ell}\norm{P_{=1} f^{\star}}_2^2
  +\frac{4}{c}2^{2\ell-n}\E[f]^2,
  \]
  and dividing by $2^{\ell}$ gives~\eqref{eq:8}, thereby completing the proof.

\end{document}